\numberwithin{equation}{section}
\numberwithin{figure}{section}
\newtheorem{thm}{Theorem}
  \newtheorem{defn}{Definition}
  \newtheorem{fact}{Fact}
\newcommand{\bra}[1]{{\left\langle{#1}\right\vert}}
\newcommand{\ket}[1]{{\left\vert{#1}\right\rangle}}
\newcommand{\nc}{\newcommand}
\nc{\rnc}{\renewcommand}
\newcommand{\proj}[1]{|#1\rangle\langle #1|}
\nc{\vev}[1]{\langle#1\rangle}
\nc{\grad}{{\vec{\nabla}}}
\DeclareMathOperator{\poly}{poly}
\DeclareMathOperator{\polylog}{polylog}
\DeclareMathOperator{\tr}{Tr}
\DeclareMathOperator{\rank}{rank}
\DeclareMathOperator{\supp}{supp}
\newcommand{\be}{\begin{equation}}
\newcommand{\ee}{\end{equation}}
\newcommand{\bea}{\begin{eqnarray}}
\newcommand{\eea}{\end{eqnarray}}
\newcommand{\nn}{\nonumber}
\newcommand{\bi}{\begin{itemize}}
\newcommand{\ei}{\end{itemize}}
\newcommand{\bn}{\begin{enumerate}}
\newcommand{\en}{\end{enumerate}}
\def\beas#1\eeas{\begin{eqnarray*}#1\end{eqnarray*}}
\def\ba#1\ea{\begin{align}#1\end{align}}
\nc{\bas}{\[\begin{aligned}}
\nc{\eas}{\end{aligned}\]}
\nc{\bpm}{\begin{pmatrix}}
\nc{\epm}{\end{pmatrix}}
\def\nn{\nonumber}
\def\L{\left} 
\def\R{\right}
\newtheorem{cor}[thm]{Corollary}
\newtheorem*{rep@theorem}{\rep@title}
\newcommand{\newreptheorem}[2]{%
\newenvironment{rep#1}[1]{%
 \def\rep@title{#2 \ref{##1}}%
 \begin{rep@theorem}}%
 {\end{rep@theorem}}}
\def\eps{\epsilon}
\def\cE{\mathcal{E}}
\def\cH{\mathcal{H}}
\def\cL{{\cal L}}
\def\cX{{\cal X}}
\def\cY{{\cal Y}}
\DeclareMathOperator*{\E}{\mathbb{E}}
\def\benum{\begin{enumerate}}
\def\eenum{\end{enumerate}}
\def\bit{\begin{itemize}}
\def\eit{\end{itemize}}
\def\bdesc{\begin{description}}
\def\edesc{\end{description}}
\nc{\todo}[1]{\textcolor{red}{todo: #1}}
\def\begsub#1#2\endsub{\begin{subequations}\label{eq:#1}\begin{align}#2\end{align}\end{subequations}}
\nc\qand{\qquad\text{and}\qquad}
\nc\mnb[1]{\medskip\noindent{\bf #1}}
\nc\mn{\medskip\noindent}
\nc{\nl}{\nn \\ &=}  
\nc{\nnl}{\nn \\ &}  
\nc{\fot}{\frac{1}{2}} 
\nc{\oo}[1]{\frac{1}{#1}} 
\newcommand{\ben}{\begin{enumerate}}
\newcommand{\een}{\end{enumerate}}
\nc{\mc}{\mathcal}
\nc{\beq}{\begin{equation}}
\nc{\eeq}{\end{equation}}
\nc{\norm}[1]{\L\| #1 \R\|}
\nc{\onenorm}[1]{\L\| #1 \R\|_1} 
\newcommand{\hannote}[1]{\textcolor{blue}{\small {\textbf{(Han:} #1\textbf{) }}}}
\nc{\Ra}{\Rightarrow}
\nc{\zo}{\{0,1\}}	
\nc{\trp}[1]{\tr \L( #1 \R)} 
\newcommand {\br} [1] {\ensuremath{ \left( #1 \right) }}
\newcommand {\minusspace} {\: \! \!}
\newcommand {\fn} [2] {\ensuremath{ #1 \minusspace \br{ #2 } }}
\newcommand{\sfrec}{\mathsf{rec}}
\newcommand{\sfvc}{\mathsf{VC}}
\newcommand{\sfcs}{\mathsf{CS}}
\newcommand{\sfD}{\mathsf{R}}
\newcommand{\Ext}{\mathsf{Ext}}
\newcommand {\Hmin}{\mathrm{H}_{\min}}
\newcommand{\sfR}{\mathsf{R}}
\newcommand{\sfQ}{\mathsf{Q}}
\newcommand{\Tr}{\mathrm{Tr}}
\newcommand\boddu[1]{\textcolor{red}{#1}}
\newcommand{\ketbra}[1]{|#1\rangle\langle#1|}
\newcommand{\defeq}{\ensuremath{ \stackrel{\mathrm{def}}{=} }}
\newcommand{\X}{\mathcal{X}}
\newcommand{\Y}{\mathcal{Y}}
\newcommand{\Z}{\mathcal{Z}}
\newcommand{\M}{\mathcal{M}}
\newcommand{\cP}{\mathcal{P}}
\newcommand{\I}{\mathrm{I}}
\newcommand{\mI}{\mathrm{I}_{\max}}
\newcommand{\id}{\mathbb{I}}
\newcommand{\err}{\mathrm{err}}
\newcommand{\var}{\mathsf{Var}}
\newcommand{\suppress}[1]{}
\newcommand{\dmax}[2]{\fn{\mathrm{D}_{\max}}{#1 \middle\| #2}}
\newcommand {\imax}{\ensuremath{\mathrm{I}_{\max}}}
\newcommand{\cc}{\mathrm{CC}}
\newcommand {\hmin} [2] {\fn{ \mathrm{H }_{\min}}{#1 \middle | #2}}
\title{On relating one-way classical and quantum communication complexities}
\author{Naresh Goud Boddu}
\email{naresh.boddu@ntt-research.com}
\affiliation{NTT Research, Sunnyvale, California, USA}
\author{Rahul Jain}
\email{rahul@comp.nus.edu.sg}
\affiliation{Centre for Quantum Technologies and Department of Computer Science, National University of Singapore and MajuLab, UMI 3654, Singapore}
\author{Han-Hsuan Lin}
\email{linhh@cs.nthu.edu.tw}
\affiliation{National Tsing Hua University, Taiwan}
\begin{document}

\maketitle

  \begin{abstract}
 Communication complexity is the amount of communication needed to compute a function when the function inputs are distributed over multiple parties. In its simplest form, one-way communication complexity, Alice and Bob compute a function $f(x,y)$, where $x$ is given to Alice and $y$ is given to Bob, and only one message from Alice to Bob is allowed. A fundamental question in quantum information is the relationship between one-way quantum and classical communication complexities, i.e., how much shorter the message can be if Alice is sending a quantum state instead of bit strings? We make some progress towards this question with the following results.
 
 

  Let $f: \X \times \Y \rightarrow \Z \cup \{\bot\}$ be a partial function and $\mu$ be a distribution with support contained in $f^{-1}(\Z)$. Denote $d=|\Z|$. Let $\sfD^{1,\mu}_\eps(f)$ be the classical one-way communication complexity of $f$; $\sfQ^{1,\mu}_\eps(f)$ be the  quantum one-way communication complexity of $f$ and $\sfQ^{1,\mu, *}_\eps(f)$ be the entanglement-assisted quantum one-way communication complexity of $f$, each with distributional error (average error over $\mu$) at most $\eps$. We show:
  \begin{enumerate}
      \item If $\mu$ is a product distribution, $\eta > 0$ and $0 \leq \eps \leq 1-1/d$, then,
      \begin{align*}
          &\sfR^{1,\mu}_{2\eps -d\eps^2/(d-1)+ \eta}(f) \\ & \quad \quad  \leq 2\sfQ^{1,\mu, *}_{\eps}(f)  +  O(\log\log (1/\eta))\enspace.
      \end{align*}
   In other words for $\delta, \eta > 0$ (by setting $\eps = 1 - \frac{1}{d} - \delta$),
    \begin{align*}
          &\sfR^{1,\mu}_{1- \frac{1}{d} - \frac{d}{d-1} \delta^2 + \eta}(f) \\ & \quad \quad \leq 2 \sfQ^{1,\mu, *}_{1- \frac{1}{d} - \delta}(f) +  O(\log\log (1/\eta))\enspace.
      \end{align*} We show similar results for other related communication models. 
   \item If $\mu$ is a non-product distribution and $\Z=\zo$, then $\forall \eps, \eta > 0$ such that $\eps/\eta + \eta < 0.5$,
   \[\sfD^{1,\mu}_{3\eta}(f) = O(\sfQ^{1,\mu}_{{\eps}}(f) \cdot \sfcs(f)/\eta^3)\enspace, \]where
      	\[\sfcs(f) = \max_{y} \min_{z\in\{0,1\}}  \vert \{x~|~f(x,y)=z\} \vert  \enspace.\]

  \end{enumerate}
  
\end{abstract}

\section{Introduction}Communication complexity concerns itself with characterizing the minimum number of bits or qubits that distributed parties need to exchange in order to accomplish a given task (such as computing a function $f$). Over the years, different models of communication for two party and multi party communication~\cite{BDHT99} have been proposed and studied. We consider only two party communication models in this paper. Communication complexity models have established striking connections with other areas in theoretical computer science, such as data structures, streaming algorithms, circuit lower bounds, decision tree complexity, VLSI designs, etc.

 In the two-way communication model, two parties Alice and Bob receive an input $x \in \X$ and $y \in \Y$ respectively. They interact with each other, communicating several messages, in order to jointly compute a given function $f(x,y)$ of their inputs. Their goal is to do this with as little communication as possible. Suppose if only one message is allowed, say from Alice to Bob, and Bob outputs $f(x,y)$ without any further interaction with Alice, then the model is called one-way. We refer readers to the textbook of Kushilevitz and Nisan~\cite{KN96} for a comprehensive introduction to the field of classical communication complexity. The work of Yao~\cite{Yao93} introduced quantum communication complexity, and since then various other analogous quantum communication models are proposed and studied. In the quantum communication models, the parties send quantum messages and  are allowed to use quantum operations. 

 In the current paper, we study the relation between quantum and classical one-way communication complexities. Let $\sfR^{1}_{\eps}(f)$ denote the classical one-way communication complexity of $f$~(Alice and Bob are allowed to use public and private  randomness independent of the inputs);  $\sfQ^{1}_{\eps}(f)$ denote the quantum one-way communication complexity of $f$ and $\sfQ^{1,*}_{\eps}(f)$ denote the entanglement-assisted quantum one-way communication complexity of $f$, each with worst case error $\epsilon$. Let $\mu$ be a probability distribution over $\X \times \Y$ and $\mu_X$ be the marginal of $\mu$ on $\X$. Let $\sfR^{1, \mu}_{\eps}(f)$ represent the classical one-way communication complexity of $f$; $\sfQ^{1, \mu}_{\eps}(f)$ denote the quantum one-way communication complexity of $f$ and $\sfQ^{1,\mu,*}_{\eps}(f)$ denote the entanglement-assisted quantum one-way communication complexity of $f$, each with distributional error (average error over $\mu$) at most $\eps$. Let $\sfR^{1, \mu_X}_{\eps}(f)$ represent the classical one-way communication complexity of $f$ with distributional error for worst case $y$ while $x$ is averaged over the distribution $\mu_X$ at most $\eps$; $\sfQ^{1, \mu_X}_{\eps}(f)$ denote the quantum one-way communication complexity of $f$ and $\sfQ^{1,\mu_X,*}_{\eps}(f)$ denote the entanglement-assisted quantum one-way communication complexity of $f$, each with distributional error for worst case $y$ while $x$ is averaged over the distribution $\mu_X$ at most $\eps$. Please refer to Section~\ref{sec:prelims}~[\ref{subsect:comm_complex}] for precise definitions.

 A fundamental question about one-way communication complexity is the relation between $\sfR^{1}_{\eps}(f)$ and $\sfQ^{1}_{\eps}(f)$ (or $\sfQ^{1,*}_{\eps}(f)$). Clearly 
 $\sfQ^{1,*}_{\eps}(f) \leq \min\{\sfQ^{1}_{\eps}(f), \sfR^{1}_{\eps}(f)\}$. When $f$ is a partial function, Gavinsky et al.\newline~\cite{GKKRW07} established an exponential separation between $\sfR^{1}_{\eps}(f)$ and $\sfQ^{1}_{\eps}(f)$. It is a long standing open problem to relate $\sfR^{1}_{\eps}(f)$ and $\sfQ^{1,*}_{\eps}(f)$ for a total function $f$. Since both measures are related to their distributional versions, $\sfD^{1, \mu}_{\eps}(f)$ and $\sfQ^{1, \mu, *}_{\eps}(f)$, via Yao's Lemma~\cite{Yao79}, we study the problem of relating measures $\sfD^{1, \mu}_{\eps}(f)$ and $\sfQ^{1, \mu, *}_{\eps}(f)$ for a fixed distribution $\mu$.

\subsection*{Previous results}

For a total function $f : \X \times \Y \to \{0, 1\}$, its Vapnik-Chervonenkis (VC) dimension, denoted by $\sfvc(f)$, is an important complexity measure, widely studied specially in the context of computational learning theory. If $\mu$ is a product distribution, Kremer, Nisan and Ron~\cite{KNR95} established a connection between the measures  $\sfD^{1, \mu}_{\eps}(f)$ and $\sfvc(f)$ as follows:
\begin{align}
    \sfD^{1, \mu}_{\eps}(f) = O\left(\frac{1}{\eps} \log\left( \frac{1}{\eps}\right)  \sfvc(f)\right). \nn
\end{align}
Ambainis et al.~\cite{ANTV99} showed the following:
\begin{align}
    \max_{\text{product } \lambda} \sfQ^{1, \lambda, *}_{\eps}(f) = \Omega \left( \sfvc(f)\right). \nn
\end{align}
Above equations establish that for a product distribution $\mu$, 
$$\max_{ \text{product } \lambda} \sfQ^{1, \lambda, *}_{\eps}(f) = \Omega(\sfD^{1, \mu}_{\eps}(f)).$$ 
Jain and Zhang~\cite{JZ09} extended the result of \cite{KNR95} when  $\mu$ is any  (non-product) distribution given as follows:
 \begin{align*}
   &\sfD^{1, \mu}_{\eps}(f) \\ &= O\left(\frac{1}{\eps} \log \left( \frac{1}{\eps}\right) \left( \frac{\I(X:Y)}{\eps}+1\right) \sfvc(f)\right). \nn
\end{align*}
For a function $f : \X \times \Y \to \{0, 1\}$, another measure that is often very useful in understanding classical one-way communication complexity, is the rectangle bound (denoted $\sfrec(f)$) a.k.a. the corruption bound. The rectangle bound $\sfrec(f)$ is  defined via a distributional version $\sfrec^{\mu}(f)$. It is a well-studied measure and $\sfrec^{1,\mu}(f)$ is well known to form a lower bound on $\sfD^{1,\mu}(f)$. If $\mu$ is a product distribution,~\cite{JZ09} showed,
\begin{align}
   \sfQ^{1, \mu}_{\eps^3}(f) = \Omega \left( \sfrec_{\eps}^{1, \mu}(f)\right). \nn
\end{align}
For a product distribution $\mu$, Jain, Klauck and Nayak~\cite{JKN08} showed,
\begin{align}
    \max_{ \text{product }\lambda} \sfrec_{\eps}^{1,\lambda}(f) = \Omega(\sfD^{1, \mu}_{\eps}(f)). \nn
\end{align}
Above equations establish that for a product distribution $\mu$, 
$$\max_{ \text{product } \lambda} \sfQ^{1, \lambda}_{\eps^3}(f) =\Omega(\sfD^{1, \mu}_{\eps}(f)).$$ 
However, it remained open whether $\sfD^{1, \mu}_{\eps}(f)$ and $\sfQ^{1, \mu}_{\eps}(f)$ (or $\sfQ^{1, \mu,*}_{\eps}(f)$) are related for a fixed distribution $\mu$. We answer it in positive and show the following results. 
\subsection*{Our results}
\begin{thm}\label{intro:thm1}
Let $f: \X \times \Y \rightarrow \Z \cup \{\bot\}$ be a partial function~\footnote{A partial function under a product $\mu$ is basically same as a total function.} and $\mu$ be a product distribution supported on $f^{-1}(\Z)$. Denote $d=|\Z|$. Let $\eta >0$ and $0 \leq \eps \leq1-1/d$.  Then,
\begin{align*}
&\sfR^{1,\mu}_{2\eps-d\eps^2/(d-1)+\eta}(f)  \\ &\quad \quad \leq  2\sfQ^{1,\mu,*}_{\eps}(f) +  O\L(\log\log (1/\eta)\R), \\
&\sfR^{1,\mu}_{2\eps-d\eps^2/(d-1)+\eta}(f)\\ &\quad \quad \leq\sfQ^{1,\mu}_{\eps}(f) +  O\L(\log\log (1/\eta)\R),\\
&\sfR^{1,\mu_X}_{2\eps-d\eps^2/(d-1)+\eta}(f) \\&\quad \quad \leq 2\sfQ^{1,\mu_X,*}_{\eps}(f)  +  O\L(\log\log (1/\eta)\R),\\
&\sfR^{1,\mu_X}_{2\eps-d\eps^2/(d-1)+\eta}(f) \\&\quad \quad  \leq\sfQ^{1,\mu_X}_{\eps}(f) +  O\L(\log\log (1/\eta)\R).
\end{align*}
\end{thm}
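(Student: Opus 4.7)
The plan is to classically simulate the given entanglement-assisted quantum one-way protocol, via two ingredients: a convexity identity that explains the peculiar error form $2\eps - d\eps^2/(d-1)$, and a sampling-based construction of the classical message that exploits the product structure of $\mu$.

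First, by Yao's minimax specialised to $\mu$, I would fix an entanglement-assisted quantum protocol $\Pi$ of $Q = \sfQ^{1,\mu,*}_\eps(f)$ qubits whose average error under $\mu$ is at most $\eps$. Writing $P_{x,y}$ for Bob's output distribution on $\Z$ conditioned on inputs $(x,y)$, this gives $\E_{(x,y)\sim\mu}\!\left[P_{x,y}(f(x,y))\right] \geq 1-\eps$. The first key ingredient is then the following convexity identity. For any probability distribution $P$ on a set of size $d$ with $P(Z) \geq 1-\eps$ for some distinguished element $Z$, Cauchy--Schwarz yields $\sum_z P(z)^2 \geq (1-\eps)^2 + \eps^2/(d-1)$, with equality when the wrong mass $\eps$ is spread uniformly over the $d-1$ incorrect labels. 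Equivalently, two independent samples $B_1, B_2 \sim P$ disagree with probability at most $2\eps - d\eps^2/(d-1)$. This matches precisely the target error in the theorem, which suggests that the classical protocol should deliver to Bob two conditionally independent samples of his quantum output and declare success exactly when they agree.

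Next, I would construct the classical protocol as follows. Alice and Bob share public randomness and use it to sample $T = O(\log(1/\eta))$ reference inputs $y'_1, \dots, y'_T$ i.i.d.\ from $\mu_Y$. Since $\mu$ is a product distribution, $x$ is independent of both the $y'_t$'s and Bob's input $y$. Alice locally computes the quantum output distribution $P_{x, y'_t}$ for each $t$ (she knows $\Pi$ and $x$), picks a representative index $t^*$ using $O(\log\log(1/\eta))$ bits, and transmits two blocks of $Q$ classical bits each that together let Bob produce two conditionally independent outputs $B_1, B_2$ distributed as $P_{x, y}$. Bob outputs $B_1$ when $B_1 = B_2$ and a default symbol otherwise; the convexity identity then bounds the error by $2\eps - d\eps^2/(d-1) + \eta$, where the $\eta$ absorbs the slack from Chernoff-type sampling and from discretising probabilities on the pool $\{y'_t\}$.

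The hard part will be the quantum-to-classical encoding in the previous step: realising two conditionally independent draws of Bob's quantum output using only $Q$ classical bits per draw, without any shared entanglement. The factor of $2$ parallels the teleportation cost of two classical bits per qubit, and making the simulation go through without genuine entanglement should use the product structure of $\mu$ essentially, so that Alice's classical simulation of Bob's measurement can be delocalised across the random reference $y'_{t^*}$ and combined with an appropriate rounding of the quantum state. The remaining three inequalities in the theorem (the non-entangled case with factor $1$, and the $\mu_X$-worst-case variants) should follow by direct adaptations of the same construction: the non-entangled case drops the teleportation factor of $2$ since Alice can send the quantum description directly, while the $\mu_X$ variants replace expectation over $\mu_Y$ by a worst case over $y$ in the concentration step used to pick $t^*$.
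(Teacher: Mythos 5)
Your proposal has a genuine gap at its central step. The claim that Alice can send ``two blocks of $Q$ classical bits each that together let Bob produce two conditionally independent outputs $B_1,B_2$ distributed as $P_{x,y}$'' is not a construction but a restatement of the theorem in stronger form: if Alice could enable Bob to sample from his own output distribution $P_{x,y}$ using $O(\sfQ^{1,\mu,*}_\eps(f))$ classical bits, then a \emph{single} sample already yields a classical protocol with error $\eps+\eta$, which is stronger than what is being proved, and nothing in your sketch explains how such an encoding is obtained. The device of publicly sampling reference inputs $y'_1,\dots,y'_T\sim\mu_Y$ does not address this, because Bob's measurement depends on his actual input $y$, not on the reference pool, and Alice cannot anticipate which $\M^y$ will be applied; ``delocalising the measurement across $y'_{t^*}$'' is exactly the point that needs a proof and where the product structure must enter quantitatively. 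Relatedly, your reading of the error term is misleading: with your ``agree/output-default'' rule the error is $1-P_{x,y}(f(x,y))^2$ plus the disagreement penalty, not the collision bound $2\eps-d\eps^2/(d-1)$; the coincidence that this expression equals $1-g(1-\eps)$ with $g(x)=x^2+(1-x)^2/(d-1)$ is real, but in the actual argument $g$ arises from a measurement-theoretic fact, not from a two-sample agreement test. Finally, your $O(\log\log(1/\eta))$ term (index of $t^*$) and your factor $2$ (``teleportation cost'') have no supporting argument.

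For contrast, the paper's route is: replace Bob's optimal measurement by the pretty good measurement, whose success probability is at least $g$ applied to the optimal one (this, with Jensen's inequality over $y$, is the sole source of the degradation $\eps\mapsto 2\eps-d\eps^2/(d-1)$); then use the product structure of $\mu$ to observe that the normalizing operator $A=\sum_x\mu(x)\rho^x_Q$ is independent of $y$, so the PGM for guessing $z=f(x,y)$ splits as $E^{pgm,y}_z=\sum_{x'\in S^y_z}E^{pgm}_{x'}$, i.e.\ it factors through the $y$-independent PGM that guesses $X$. Hence Alice can sample the outcome $C=\M^{pgm}_X(Q)$ herself and Bob simply outputs $f(C,y)$. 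The message $C$ is long but satisfies $\imax(X:C)\le\imax(X:Q)\le 2a$ (this dimension bound on the max-information, with $Q\equiv DB$ and the entanglement independent of $X$, is where the factor $2$ comes from; it is $a$ without prior entanglement), and the one-shot message compression of Jain--Radhakrishnan--Sen converts $C$ into a message of length $2a+O(\log\log(1/\eta))$ at an additive error cost $\eta$. Your proposal is missing analogues of all three ingredients: the measurement replacement that produces the stated error, the $y$-independent splitting that lets Alice act alone, and the information-based compression that produces the stated communication.
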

Note that for entanglement-assisted protocols, there must be a factor of 2 because of super dense coding\suppress{Note our bound has no loss in epsilon when $\eps=0$ or $\eps=1-1/d$.}. Additionally, if $\mu$ is a non-product distribution, we show, 
   \begin{thm}\label{intro:thm2}
Let $\eps, \eta >0$ be such that $\eps/\eta + \eta < 0.5$. Let $f: \X \times \Y \rightarrow \{0,1,\bot \}$ be a partial function and $\mu$ be a distribution supported on $f^{-1}(0) \cup f^{-1}(1)$. Then,$$\sfD^{1,\mu}_{3\eta}(f)=O\left(  \frac{\sfcs(f) }{\eta^3}  \sfQ^{1,\mu}_{{\eps}} (f)  \right),$$ 
      where
      	\[\sfcs(f) = \max_{y} \min_{z\in\{0,1\}} \{ \vert \{x~|~f(x,y)=z\} \vert\}  \enspace.\] 
\end{thm}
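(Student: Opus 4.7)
The plan is to convert an optimal quantum one-way protocol $\Pi_Q$ for $f$ (of cost $q=\sfQ^{1,\mu}_{\eps}(f)$ qubits and distributional error $\eps$) into a classical one-way protocol of cost $O(q\cdot\sfcs(f)/\eta^3)$ and distributional error at most $3\eta$. The construction proceeds in three stages: a Markov reduction to typical $y$'s, a parallel-repetition amplification of $\Pi_Q$, and a classical simulation that exploits the smallness of the minority set $S_y:=\{x:f(x,y)=z_y\}$ for each $y$, where $z_y\in\{0,1\}$ is the minority label with $|S_y|\le\sfcs(f)$.

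First, for each $y$ set $\err_y:=\bbE_{x\sim\mu(\cdot\mid y)}[\err_{\Pi_Q}(x,y)]$ and let $Y_{\mathrm{good}}:=\{y:\err_y\le\eta\}$. Markov's inequality gives $\mu_Y(Y_{\mathrm{good}})\ge 1-\eps/\eta$. The classical protocol will output the majority label $1-z_y$ whenever $y\notin Y_{\mathrm{good}}$; this contributes at most $\eps/(2\eta)$ to the total distributional error, which is strictly less than $\eta$ under the hypothesis $\eps/\eta+\eta<1/2$. Second, amplify $\Pi_Q$ by $k=\Theta(1/\eta^2)$ independent parallel copies with majority-vote decoding to obtain a quantum protocol $\Pi_Q'$ of cost $O(q/\eta^2)$ qubits whose per-input success probability on each $y\in Y_{\mathrm{good}}$ is at least $1-\eta$ for all but an $O(\eta)$-fraction of $x\sim\mu(\cdot\mid y)$.

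Third, convert $\Pi_Q'$ into a classical one-way protocol via a sample-and-fingerprint scheme. Using public randomness, Alice and Bob agree on $N=\Theta(\sfcs(f)/\eta)$ candidate inputs $x_1,\ldots,x_N\sim\mu_X$; for each candidate $x_i$, Alice classically records an $O(q/\eta^2)$-bit fingerprint derived from Bob's POVM outcome on $\Pi_Q'(x)$ when the hypothesized Alice input is $x_i$, and sends these $N$ fingerprints. Bob, given $y\in Y_{\mathrm{good}}$, enumerates the at most $\sfcs(f)$ members of $S_y$ and tests each against the received fingerprints: he outputs $z_y$ if some $x_i\in S_y$ is consistent with the fingerprint of Alice's true $x$, and $1-z_y$ otherwise. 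The total communication is $N\cdot O(q/\eta^2)=O(q\cdot\sfcs(f)/\eta^3)$, and a Chernoff-plus-union bound across the $N$ candidates and the $\le\sfcs(f)$ elements of $S_y$ controls the decoding error at $\eta$. The three error sources (Markov loss, residual error of $\Pi_Q'$ on typical inputs, and fingerprint/sampling failure) add to at most $3\eta$.

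The main obstacle is the third stage: establishing that the classical sample-and-fingerprint scheme faithfully mimics Bob's POVM outcomes for every $y\in Y_{\mathrm{good}}$ simultaneously, with overhead only $\sfcs(f)/\eta^3$ per qubit of $\Pi_Q$. The key enabling observation is that for each good $y$ Bob need not recover Alice's input $x$; he only needs to decide the small-set membership question ``$x\in S_y$?'' with $|S_y|\le\sfcs(f)$. Sampling $\Theta(\sfcs(f)/\eta)$ candidates therefore hits $S_y$ with overwhelming probability whenever $x\in S_y$, while the $O(q/\eta^2)$-bit fingerprints derived from $\Pi_Q'$ distinguish the true $x$ from spurious candidates; carefully combining amplification with a covering over the family $\{S_y\}_{y\in Y_{\mathrm{good}}}$ is what makes the quantum-to-classical simulation work under a non-product $\mu$ without the exponential blow-up typical of generic JRS-style message compression.
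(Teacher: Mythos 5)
The gap is in your third stage, and it is the entire difficulty of the theorem. The object you need is a classical string that Alice can compute \emph{by herself} (she does not know $y$) from which Bob, knowing only $y$, can evaluate an approximation of his $y$-dependent POVM on Alice's message state, simultaneously for all relevant $y$. You never construct it: a ``fingerprint derived from Bob's POVM outcome on $\Pi_Q'(x)$ when the hypothesized Alice input is $x_i$'' is not something Alice can produce, since Bob's measurement depends on $y$; and no definition of the fingerprint, nor any argument that $O(q/\eta^2)$ bits suffice for consistency tests against every $y\in Y_{\mathrm{good}}$ at once, is given, so the ``Chernoff-plus-union bound'' has nothing to bound. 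The sampling step is also wrong as stated: it confuses cardinality with probability mass. Drawing $N=\Theta(\sfcs(f)/\eta)$ candidates from $\mu_X$ need not hit a particular minority set $S^y_{z_y}$ of cardinality at most $\sfcs(f)$, since $\mu_X(S^y_{z_y})$ can be arbitrarily small; and even hitting $S^y_{z_y}$ is useless unless a sampled candidate equals Alice's actual input $x$, an event of probability about $N\mu_X(x)$, which is not controlled by $\sfcs(f)$.

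For contrast, the paper's proof consists precisely of making this missing step work. For each $y$ it replaces the minority element $E^y_{b_y}$ by the projector $\tilde E^y_{b_y}$ onto the span of $\{E^y_{b_y}\ket{\psi^x_Q} : x\in S^y_{b_y}\}$, which leaves the acceptance probability unchanged on the minority set, only decreases it on the majority set, and satisfies $\Vert \tilde E^y_{b_y}\Vert_F^2\le \sfcs(f)$ by Fact~\ref{factprojfrobrank}. Alice then sends a classical shadow (Fact~\ref{hk_trace}) obtained from $T=O\bigl(\sfcs(f)\log(1/\eta)/\eta^2\bigr)$ stabilizer measurements on copies of $\ket{\psi^x_Q}$; the Frobenius bound is exactly what lets Bob estimate $\trp{\ketbra{\psi^x_Q}\tilde E^y_{b_y}}$ to accuracy $\eta$ for his own $y$, and the hypothesis $\eps/\eta+\eta<0.5$ converts that estimate into a correct decision on the Markov-good pairs. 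Finally, contrary to your closing remark that you avoid JRS-style compression, that compression (Fact~\ref{fact:compress}) is essential even in the paper: the raw shadow costs $O(Ta^2)$ bits, and only the bound $\imax(X:S)\le Ta$ lets it be compressed to $Ta+O(\log\log(1/\eta))$, which is what yields the stated $O\bigl(\sfcs(f)\,\sfQ^{1,\mu}_{\eps}(f)/\eta^3\bigr)$. As a side remark, if your membership-in-$S^y_{z_y}$ idea is made precise it collapses to public-coin hashing of $x$ against the at most $\sfcs(f)$ elements of $S^y_{z_y}$, which does give a correct one-way protocol of cost $O(\log(\sfcs(f)/\eta))$ with one-sided error --- but that argument makes no use of the quantum protocol and is not the quantum-to-classical simulation you set out to prove.
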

Both Theorem~\ref{intro:thm1} and Theorem~\ref{intro:thm2} are proved by converting quantum protocols into classical protocols directly.

The bound provided by Theorem~\ref{intro:thm2} depends on the column sparsity $\sfcs(f)$. Although $\sfcs(f)$ can be as large as $O(|\X|)$, giving a bound exponentially worse than the $O(\log(|\X|))$ brute force protocol, Theorem~\ref{intro:thm2} is useful when $\sfcs(f)$ is constant. In particular, Theorem~\ref{intro:thm2} can convert the quantum fingerprinting protocol~\cite{buhrman2001quantum}~\footnote{This protocol is proposed for simultaneous message passing model, but it can be easily converted into one for one-way communication model.} on EQUALITY function into a classical communication protocol with similar complexity for the worst case by combining it with Yao's Lemma~\cite{Yao79}.
 
   \subsection*{Proof overview}


For a product distribution $\mu$, we upper bound $\sfD^{1, \mu}_{\eps}(f)$ by $\sfQ^{1, \mu,*}_{\eps}(f)$, using ideas from K{\"o}nig and Tehral~\cite{konig2008bounded} and 
 Jain,  Radhakrishnan, and  Sen~\cite{JRS03,JRS05}.
 For an entanglement-assisted quantum one-way communication protocol, let $Q\equiv DE_B$ represent Alice's quantum message $D$ and Bob's part of the entanglement $E_B$. We first replace Bob's measurement by the pretty good measurement (PGM) (with a small loss in the error probability). Then we use an  idea of~\cite{konig2008bounded} to show that we can "split" Bob's PGM into the PGM for guessing $X$. 
Since this new $X$-guessing PGM is independent of $Y$, Alice can apply it herself on the register  $Q$ (Alice's message and Bob's share of prior entanglement) and send the measurement outcome $C$ to Bob, who will just output $f(C,Y)$. The classical message that Alice sent is long (in fact it is equal to the length of $X$) but it has low max-information with input $X$, since (by monotonicity of the max-information) $\imax(X:C) \leq \imax(X:Q) \leq  2\log( \vert  D \vert )$. Note that the second inequality has a factor of $2$ due to super dense coding. We then use a compression protocol from~\cite{JRS03,JRS05} to compress $C$ into another short message $C'$ of size $2\log( \vert  D \vert )$. The same argument works for variants of this result where the two parties does not share entanglement, and/or where the error probability is averaged over a distribution of $x$ and maximized over $y$.

For a non-product distribution $\mu$, we upper bound $\sfD^{1, \mu}_{\eps}(f)$ by $\sfQ^{1, \mu}_{\eps}(f)$, using ideas of Huang,  Kueng and Preskill~\cite{HK19} and~\cite{JRS03,JRS05}. For a quantum one-way communication protocol with quantum message~\footnote{We assume, by at most doubling the message size that Alice's message for any input is a pure state.} $Q$, we first use the idea of~\cite{HK19} to show that there exists a "classical shadow" $C$ of the quantum message $Q$, which will allow Bob to estimate $\Tr(E^y_bQ)$ (for any $b \in \{0,1\}$, where $\M^y=\{ E^y_0, E^y_1\}$ is Bob's measurement on input $y$). This allows Alice to send the classical shadow $C$ of quantum message $Q$. However, the precision of the  classical shadow procedure of~\cite{HK19} depends on $\Vert E^y_b \Vert_F^2$, so we need to bound  $\Vert E^y_b \Vert_F^2$.
We show that there exists measurement operator $\tilde{E}^y_b$ (for some $b \in \{0,1\}$) such that $\Vert \tilde{E}^y_b \Vert_F^2$ is at most the "column sparsity" of function $f$ and $\Tr(\tilde{E}^y_bQ)$ is "close" to $\Tr(E^y_bQ)$.
We again note that the classical shadow has low max-information with input $X$, since (using the monotonicity for max-information) $\imax(X:C) \leq \imax(X:Q) \leq  \log( \vert  Q \vert )$. As before, we use the compression protocol from~\cite{JRS03,JRS05} to compress $C$ into another short message $C'$ of size $\log( \vert  Q \vert )$.

\suppress{

Communication complexity concerns itself with characterizing the minimum number of bits that distributed parties need to exchange in order to accomplish a given task (such as computing a function $f$). Over the years, different models of communication for two party and multi party communication~\cite{BDHT99} have been proposed and studied. We consider only two party communication models in this paper. Communication complexity models have established striking connections with other areas in theoretical computer science, such as data structures, streaming algorithms, circuit lower bounds, decision tree complexity, VLSI designs, etc.

 In the two-way communication model, two parties Alice and Bob receive an input $x \in \X$ and $y \in \Y$ respectively. They interact with each other, communicating several messages, in order to jointly compute a given function $f(x,y)$ of their inputs. Their goal is to do this with as little communication as possible. Suppose if only one message is allowed, say from Alice to Bob, and Bob outputs $f(x,y)$ without any further interaction with Alice, then the model is called one-way. We refer readers to the textbook \cite{KNR95} for a comprehensive introduction to the field of classical communication complexity. The work of Yao \cite{Yao93} introduced quantum communication complexity, and since then various other analogous quantum communication models are proposed and studied. In the quantum communication models, the parties send quantum messages and  are allowed to use quantum operations.
 
 In the current paper, we study the relation between quantum and classical one-way communication complexities. Let $\sfR^{1, pub}_{\eps}(f)$ denote the public-coin randomized one-way communication complexity of $f$ with worst case error $\epsilon$;  $\sfQ^{1}_{\eps}(f)$ denote the quantum one-way communication complexity of $f$ with worst case error $\epsilon$ and $\sfQ^{1,*}_{\eps}(f)$ denote the entanglement-assisted one-way communication complexity of $f$ with worst case error $\epsilon$. Let $\mu$ be a probability distribution over $\X \times \Y$. Let $\sfD^{1, \mu}_{\eps}(f)$ represent the classical one-way communication complexity of $f$ with distributional error (average error over $\mu$) at most $\eps $; $\sfQ^{1, \mu}_{\eps}(f)$ denote the quantum one-way communication complexity of $f$  with distributional error  at most $\eps$ and $\sfQ^{1,\mu,*}_{\eps}(f)$ denote the entanglement-assisted one-way communication complexity  of $f$ with distributional error  at most $\eps$. Please refer to Section~\ref{sec:prelims} for precise definitions.

 A fundamental question about one-way communication complexity is the relation between $\sfR^{1, pub}_{\eps}(f)$ and $\sfQ^{1}_{\eps}(f)$ (or $\sfQ^{1,*}_{\eps}(f)$). Clearly 
 $\sfQ^{1,*}_{\eps}(f) \leq \min\{\sfQ^{1}_{\eps}(f), \sfR^{1, pub}_{\eps}(f)\}$. 
 When $f$ is a partial function, ~\cite{GKKRW07} established an exponential separation between $\sfR^{1, pub}_{\eps}(f)$ and $\sfQ^{1}_{\eps}(f)$. It is a long standing open problem to relate $\sfR^{1, pub}_{\eps}(f)$ and $\sfQ^{1,*}_{\eps}(f)$ for a total function $f$. Since both measures are related to their distributional versions, $\sfD^{1, \mu}_{\eps}(f)$ and $\sfQ^{1, \mu, *}_{\eps}(f)$, via Yao's Lemma~\cite{Yao79}, we study the problem of relating measures $\sfD^{1, \mu}_{\eps}(f)$ and $\sfQ^{1, \mu, *}_{\eps}(f)$ for a fixed distribution $\mu$.

\subsection*{Previous results} Prior to our work, there are some results in the literature relating $\sfD^{1, \mu}_{\eps}(f)$ and $\sfQ^{1, \mu}_{\eps}(f)$ (or $\sfQ^{1, \mu,*}_{\eps}(f)$) when $\mu$ is restricted to product distributions. 
For a total function $f : \X \times \Y \to \{0, 1\}$, its Vapnik-Chervonenkis (VC) dimension, denoted by $\sfvc(f)$, is an important complexity measure, widely studied specially in the context of computational learning theory. If $\mu$ is a product distribution, Kremer, Nisan and Ron~\cite{KNR95} established a connection between the measures  $\sfD^{1, \mu}_{\eps}(f)$ and $\sfvc(f)$ as follows:
\begin{align}
    \sfD^{1, \mu}_{\eps}(f) = O\left(\frac{1}{\eps} \log\left( \frac{1}{\eps}\right)  \sfvc(f)\right). \nn
\end{align}
Nayak~\cite{ANTV99} showed the following:
\begin{align}
    \max_{\text{product } \lambda} \sfQ^{1, \lambda, *}_{\eps}(f) = \Omega \left( \sfvc(f)\right). \nn
\end{align}
Above equations establish that for a product distribution $\mu$, 
$$\sfD^{1, \mu}_{\eps}(f) = \max_{ \text{product } \lambda}O( \sfQ^{1, \lambda, *}_{\eps}(f)).$$ 
Jain and Zhang~\cite{JZ09} extended the result of \cite{KNR95} when  $\mu$ is any  (non-product) distribution given as follows:
\begin{align}
   \sfD^{1, \mu}_{\eps}(f) = O\left(\frac{1}{\eps} \log \left( \frac{1}{\eps}\right) \left( \frac{\I(X:Y)}{\eps}+1\right) \sfvc(f)\right). \nn
\end{align}
For a function $f : \X \times \Y \to \{0, 1\}$, another measure that is often very useful in understanding classical randomized communication complexity, is the rectangle bound (denoted by $\sfrec(f)$), also often known as the corruption bound. The rectangle bound $\sfrec(f)$ is actually defined first via a distributional version $\sfrec^{\mu}(f)$. It is a well-studied measure and $\sfrec^{1,\mu}(f)$ is well known to form a lower bound on $\sfD^{1,\mu}(f)$. If $\mu$ is a product distribution,~\cite{JZ09} showed,
\begin{align}
   \sfQ^{1, \mu}_{\eps^3}(f) = \Omega \left( \sfrec_{\eps}^{1, \mu}(f)\right). \nn
\end{align}
For a product distribution $\mu$, Jain, Klauck and Nayak~\cite{JKN08} showed:
\begin{align}
   \sfD^{1, \mu}_{\eps}(f) = \max_{ \text{product }\lambda} O\left(\sfrec_{\eps}^{1,\lambda}(f)\right). \nn
\end{align}
Above equations establish that for a product distribution $\mu$, 
$$\sfD^{1, \mu}_{\eps}(f) = \max_{ \text{product } \lambda}O( \sfQ^{1, \lambda}_{\eps}(f)).$$ 
However, it remained open whether $\sfD^{1, \mu}_{\eps}(f)$ and $\sfQ^{1, \mu}_{\eps}(f)$ (or $\sfQ^{1, \mu,*}_{\eps}(f)$) are related for a fixed distribution $\mu$. We answer it in positive and show, 
\subsection*{Our results}\begin{thm}\label{intro:thm1}
Let $\eps, \eta >0$; $f: \X \times \Y \rightarrow \{0,1,\bot \}$ be partial function~\footnote{A partial function under a product $\mu$ is basically same as a total function.} and $\mu$ be a product distribution supported on $f^{-1}(0) \cup f^{-1}(1)$. Then,
$$\sfD^{1,\mu}_{2\eps+\eta}(f) \leq \sfQ^{1,\mu,*}_{\eps}(f) /\eta+O\L(\log(\sfQ^{1,\mu,*}_{\eps}(f))/\eta\R).$$
\end{thm}
Additionally, if $\mu$ is non-product distribution, we show, 
   \begin{thm}\label{intro:thm2}
Let $\eps, \eta >0$ such that $\eps/\eta + \eta < 0.5$. Let $f: \X \times \Y \rightarrow \{0,1,\bot \}$ be partial function and $\mu$ be a distribution supported on $f^{-1}(0) \cup f^{-1}(1)$. Then,
	$\sfD^{1,\mu}_{3\eta}(f)=O\left(  \frac{\sfcs(f) }{\eta^4}  \sfQ^{1,\mu}_{{\eps}} (f)  \right)$, 
      where
      	\[\sfcs(f) = \max_{y} \min_{z\in\{0,1\}} \{ \vert \{x~|~f(x,y)=z\} \vert\}  \enspace.\] 
\end{thm}

This is the first result relating $\sfD^{1, \mu}_{\eps}(f)$ and $\sfQ^{1, \mu}_{\eps}(f)$ in terms of $\sfcs(f)$.

   \subsection*{Proof overview}

For a product distribution $\mu$, we upper bound $\sfD^{1, \mu}_{\eps}(f)$ by $\sfQ^{1, \mu,*}_{\eps}(f)$, using ideas from K{\"o}nig and Tehral~\cite{konig2008bounded} and Harsha, Jain, McAllester, Radhakrishnan~\cite{harsha2007communication}. For an entanglement-assisted one-way protocol, let $Q\equiv DE_B$ represent Alice's message $D$ and Bob's part of the entanglement $E_B$. We first replace Bob's measurement by the pretty good measurement (PGM) (with a small loss in the error probability). Then we use an  idea of~\cite{konig2008bounded} to show that we can "split" Bob's PGM into the PGM for guessing $X$. 
Since this new $X$-guessing PGM is independent of $Y$, Alice can apply it herself on the  message $Q$ and send the measurement outcome $C$ to Bob, who will just output $f(C,Y)$. The classical message that Alice sent is long (in fact it is equal to the length of $X$) but it has low mutual information with input $X$, since (using Holevo bound) $\I(X:C) \leq \I(X:Q) \leq  \log( \vert  D \vert )$. We then use a compression protocol from~\cite{harsha2007communication} to compress $C$ into another short message $C'$ of size $\log( \vert  D \vert )$.

For a non-product distribution $\mu$, we upper bound $\sfD^{1, \mu}_{\eps}(f)$ by $\sfQ^{1, \mu}_{\eps}(f)$, using ideas of Huang and Kueng~\cite{HK19} and~\cite{harsha2007communication}. For a quantum one-way communication protocol with quantum message $Q$, we first use the idea of~\cite{HK19} to show that there exists a "classical shadow" $C$ of the quantum message $Q$, which will allow Bob to estimate $\Tr(E^y_bQ)$ (for any $b \in \{0,1\}$, where $\M^y=\{ E^y_0, E^y_1\}$ is Bob's measurement on input $y$). This allows Alice to send the classical shadow $C$ of quantum message $Q$. However, the precision of the  classical shadow procedure of~\cite{HK19} depends on $\Vert E^y_b \Vert_F^2$, so we need to bound  $\Vert E^y_b \Vert_F^2$.
We show that there exists measurement operator $\tilde{E}^y_b$ (for some $b \in \{0,1\}$) such that $\Vert \tilde{E}^y_b \Vert_F^2$ is at most the "column sparsity" of function $f$ ($\sfcs(f)$). We again note that the classical shadow has low mutual information with input $X$, since (using the Holevo bound) $\I(X:C) \leq \I(X:Q) \leq  \log( \vert  Q \vert )$. As before, we use the  compression protocol from~\cite{harsha2007communication} to compress $C$ into another short message $C'$ of size $\log( \vert  Q \vert )$.
}
\suppress{
\subsection*{Open problems} 
\begin{enumerate}
    \item Note that both of our results follows a general framework that turns a quantum state learning procedure (e.g. PGM and classical shadow) into a relation between $\sfD^{1, \mu}_{\eps}(f)$ and $\sfQ^{1, \mu}_{\eps}(f)$. It is then natural to search for a quantum state learning procedure which would polynomially relate  $\sfD^{1, \mu}_{\eps}(f)$ and $\sfQ^{1, \mu}_{\eps}(f)$ for any total function $f$ and (non-product) distribution $\mu$. To the best of our knowledge, whether such  quantum state learning procedure is possible is an open problem.
    \item In our second result, we are not able to bound the entanglement-assisted communication complexity because we could not bound the mutual information of the purification. I.e. if $\I(X:Q)\leq a$ for some classical register $X$ and quantum register $Q$ (of arbitrary dimension), we don't know whether there exist a purification $R$ of $Q$ such that $\I(X:R)=O(\poly(a))$. This seems to be a fundamental question worth exploring.
\end{enumerate}
}


    

\subsection*{Organization} 
In Section~\ref{sec:prelims}, we present our notations, definitions and other information theoretic preliminaries. In Section~\ref{sec3}, we present the proof of Theorem~\ref{intro:thm1}.  In Section~\ref{sec4}, we present the proof of Theorem~\ref{intro:thm2}.

\section{Preliminary}

\label{sec:prelims}
\subsection*{Quantum information theory} All the logarithms are evaluated to the base $2$. Consider a finite dimensional Hilbert space $\cH$ endowed with an inner-product $\langle \cdot, \cdot \rangle$ (we only consider finite dimensional Hilbert spaces). A quantum state (or a density matrix of a state) is a positive semi-definite matrix on $\cH$ with trace equal to $1$. It is called {\em pure} if and only if its rank is $1$.  Let $\ket{\psi}$ be a unit vector on $\cH$, that is $\langle \psi,\psi \rangle=1$.  With some abuse of notation, we use $\psi$ to represent the state and also the density matrix $\ketbra{\psi}$, associated with $\ket{\psi}$. Given a quantum state $\rho$ on $\cH$, {\em support of $\rho$}, called $\text{supp}(\rho)$ is the subspace of $\cH$ spanned by all eigenvectors of $\rho$ with non-zero eigenvalues.

A {\em quantum register} $A$ is associated with some Hilbert space $\cH_A$. Define $\vert A \vert \defeq \dim(\cH_A)$ and $\ell(A) = \log \vert A \vert$. Let $\mathcal{L}(\cH_A)$ represent the set of all linear operators on $\cH_A$ and $\mathcal{D}(\cH_A)$, the set of all quantum states on $\cH_A$. For operators $O, O'\in \cL(\cH_A)$, the notation $O \leq O'$ represents the L\"{o}wner order, that is, $O'-O$ is a positive semi-definite matrix. State $\rho$ with subscript $A$ indicates $\rho_A \in \mathcal{D}(\cH_A)$. If two registers $A,B$ are associated with the same Hilbert space, we shall represent the relation by $A\equiv B$. For two states $\rho_A, \sigma_B$, we let $\rho_A \equiv \sigma_B$ represent that they are identical as states, just in different registers. Composition of two registers $A$ and $B$, denoted $AB$, is associated with the Hilbert space $\cH_A \otimes \cH_B$.  For two quantum states $\rho\in \mathcal{D}(\cH_A)$ and $\sigma\in \mathcal{D}(\cH_B)$, $\rho\otimes\sigma \in \mathcal{D}(\cH_{AB})$ represents the tensor product ({\em Kronecker} product) of $\rho$ and $\sigma$. The identity operator on $\cH_A$ is denoted by $\id_A$. Let $U_A$ denote maximally mixed state in $\cH_A$. Let $\rho_{AB} \in \mathcal{D}(\cH_{AB})$. Define
$$ \rho_{B} \defeq \tr_{A}{\rho_{AB}} \defeq \sum_i (\bra{i} \otimes \id_{B})
\rho_{AB} (\ket{i} \otimes \id_{B}) , $$
where $\{\ket{i}\}_i$ is an orthonormal basis for the Hilbert space $\cH_A$.
The state $\rho_B\in \mathcal{D}(\cH_B)$ is referred to as the marginal state of $\rho_{AB}$. Unless otherwise stated, a missing register from subscript in a state will represent partial trace over that register. Given $\rho_A\in\mathcal{D}(\cH_A)$, a {\em purification} of $\rho_A$ is a pure state $\rho_{AB}\in \mathcal{D}(\cH_{AB})$ such that $\tr_{B}{\rho_{AB}}=\rho_A$. Purification of a quantum state is not unique. Suppose $A\equiv B$. Given $\{\ket{i}_A\}$ and $\{\ket{i}_B\}$ as orthonormal bases over $\cH_A$ and $\cH_B$ respectively, the \textit{canonical purification} of a quantum state $\rho_A$ is $\ket{\rho_A} \defeq (\rho_A^{\frac{1}{2}}\otimes\id_B)\left(\sum_i\ket{i}_A\ket{i}_B\right)$. Note that the size (number of qubits) of the  canonical purification $\ket{\rho_A} $ is twice the size of quantum state $\rho_A$.

A quantum channel $\cE: \mathcal{L}(\cH_A)\rightarrow \mathcal{L}(\cH_B)$ is a completely positive and trace preserving (CPTP) linear map (mapping states in $\mathcal{D}(\cH_A)$ to states in $\mathcal{D}(\cH_B)$). A {\em unitary} operator $U_A:\cH_A \rightarrow \cH_A$ is such that $U_A^{\dagger}U_A = U_A U_A^{\dagger} = \id_A$. The set of all unitary operators on $\cH_A$ is  denoted by $\mathcal{U}(\cH_A)$. An {\em isometry}  $V:\cH_A \rightarrow \cH_B$ is such that $V^{\dagger}V = \id_A$ and $VV^{\dagger} = \id_B$. A POVM element is an operator $0 \le M \le \id$. We use shorthand $\bar{M} \defeq \id - M$, where $\id$ is clear from the context. We use shorthand $M$ to represent $M \otimes \id$, where $\id$ is clear from the context. A measurement $\M = \{M_1, M_2, \ldots, M_t\}$ (with POVM elements $\{M^\dagger_1 M_1, M^\dagger_2 M_2, \ldots, M^\dagger_t M_t\}$) is a set of operators such that $\sum_{i=1}^{t}M^\dagger_i M_i = \id$. When $\M$ is performed on a state $\rho$, we get as outcome a random variable $\M(\rho)$, such that $\Pr(\M(\rho) = i)=\Tr(M_i \rho M^\dagger_i )$ and the state conditioned on outcome $i$ is $\frac{M_i \rho M^\dagger_i}{\Tr(M_i \rho M^\dagger_i)}$. A projector $\Pi$ is an operator such that $\Pi^2 =\Pi$, i.e. its eigenvalues are either $0$ or $1$.

For a classical random variable $X$, we use $x \leftarrow X$ to denote $x$ is drawn from the  distribution $P_X(x) \defeq \Pr(X=x)$. A {\em classical-quantum state} (cq-state) $\rho_{XQ}$ (with $X$ a classical random variable) is of the form \[ \rho_{XQ} =  \sum_{x \in \cX}  P_X(x)\ket{x}\bra{x} \otimes \rho^x_Q , \] where ${\rho^x_Q}$ are states and $P_X(x) =\Pr(X=x)_\rho$. For an event $G \subseteq \cX =\supp(X)$, define  
\[\Pr(G)_\rho =  \sum_{x \in G} P_X(x) \quad ; \quad  \]
\[ (\rho|G)\defeq \frac{1}{\Pr(G)_\rho} \sum_{x \in G} P_X(x)\ket{x}\bra{x} \otimes \rho^x_Q.\]
For a function $Z:\X \rightarrow \Z$, define \[ \rho_{ZXQ} \defeq  \sum_{x\in \cX}  P_{X}(x) \ket{Z(x)}\bra{Z(x)} \otimes \ket{x}\bra{x} \otimes  \rho^{x}_Q .  \]We also use $U_d$ to represent the uniform distribution over $\{0,1 \}^d$. 
\begin{defn}\label{def:infoquant}    
	\begin{enumerate}
		\item For $p \geq 1$ and matrix $A$,  let $\| A \|_p$ denote the {\em Schatten} $p$-norm. $\| A \|_2$ is also referred to as the Frobenius norm, denoted $\| A \|_F$.
		\item Let $\Delta(\rho , \sigma) \defeq \frac{1}{2} \|\rho - \sigma\|_1$.
		We write $\approx_\eps$ to denote $\Delta(\rho, \sigma) \le \eps$. 
		
	\item  For a quantum state $\rho$, and integer $t > 0$, we define 
	$$\rho^{\otimes t} \defeq \rho \otimes \rho \otimes \cdots \otimes \rho  \quad (\text{t times}).$$

	

	\end{enumerate}
\end{defn}

We start with the following fundamental information theoretic quantities. We refer the reader to the excellent sources for quantum information theory \cite{Wil12, Wat16} from where the facts stated below can be found.

\begin{defn}[von Neumann entropy]\label{def:entropy}
	 The von Neumann entropy of a quantum state $\rho$ is defined as, 
	  $$ \mathrm{S}(\rho) \defeq - \Tr(\rho\log\rho).$$
\end{defn}

\begin{defn}[Relative entropy]\label{def:relentropy}
Let $\rho, \sigma$ be states with  $\supp(\rho) \subset \supp(\sigma)$. The relative entropy between $\rho$ and $\sigma$ is defined as,
$$  \mathrm{D}(\rho \vert \vert \sigma) \defeq  \Tr(\rho\log\rho) - \Tr(\rho\log\sigma).$$
	
\end{defn}

\begin{defn}[Max-relative entropy~\cite{Datta_2009,JRS03}] Let $\rho, \sigma$ be states with  $\supp(\rho) \subset \supp(\sigma)$. The max-relative entropy between $\rho$ and $\sigma$ is defined as, 
$$ \dmax{\rho}{\sigma}  \defeq  \inf \{ \lambda \in \mathbb{R} : \rho \leq 2^{\lambda} \sigma  \}.$$  
\end{defn}

\begin{defn}[Max-information~\cite{Datta_2009}]\label{def:maxinfo}
  For state $\rho_{AB}$, $$ \imax(A:B)_{\rho} \defeq   \inf_{\sigma_{B}\in \mathcal{D}(\cH_B)}\dmax{\rho_{AB}}{\rho_{A}\otimes\sigma_{B}} .$$
  If $\rho$ is a classical state (diagonal in the computational basis) then the $\inf$ above is achieved by a classical state $\sigma_B$.
\end{defn}
\begin{defn}[Mutual information]
	\label{def:mutinfo}
	Let  $\rho_{ABC}$ be a quantum state. We define the following measures.
\begin{align*}
   & \text{Mutual information} : \quad \I(A:B)_{\rho} \\
    &\defeq  \mathrm{S}(\rho_A) +  \mathrm{S}(	\rho_B)
- \mathrm{S}(\rho_{AB}) \\ &=  \mathrm{D}(\rho_{AB} \vert \vert \rho_A\otimes\rho_B )  .
\end{align*}
 \begin{align*}
     & \text{Conditional mutual information}: \I(A:B~|~C)_{\rho}  \\
     & \defeq \I(A:BC)_{\rho}-\I(A:C)_{\rho}.
 \end{align*}
\end{defn}

\begin{fact}
\label{rhoablessthanrhoaidentity}
For a cq-state  $\rho_{XA}$ ($X$ classical): $\rho_{XA}  \le \id_X \otimes \rho_{A}$ and   $\rho_{XA}  \le \rho_X \otimes \id_{A}$.
\end{fact}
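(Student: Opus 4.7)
The plan is to exploit the block-diagonal structure that a cq-state inherits from its classical $X$-register. Writing $\rho_{XA} = \sum_{x} P_X(x) \ketbra{x} \otimes \rho^x_A$, I observe that $\id_X \otimes \rho_A$ and $\rho_X \otimes \id_A$ are also block-diagonal in the computational basis on $\cH_X$, with $x$-th blocks $\rho_A$ and $P_X(x)\id_A$ respectively. Since a block-diagonal Hermitian operator is positive semi-definite if and only if every block is positive semi-definite, each of the two claimed L\"{o}wner inequalities reduces to a single PSD check on $\mathcal{L}(\cH_A)$ indexed by $x$.

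For the first inequality $\rho_{XA} \le \id_X \otimes \rho_A$, the $x$-th block of the difference of the two sides is $\rho_A - P_X(x)\rho^x_A = \sum_{x' \neq x} P_X(x')\rho^{x'}_A$, which is a non-negative combination of density matrices and so PSD. For the second inequality $\rho_{XA} \le \rho_X \otimes \id_A$, the $x$-th block of the difference is $P_X(x)(\id_A - \rho^x_A)$, which is PSD because any density matrix satisfies $\rho^x_A \le \id_A$. Assembling the blocks gives both inequalities.

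There is essentially no obstacle here: the entire content of the fact is carried by the classicality of $X$, which permits a simultaneous block-diagonalization of $\rho_{XA}$, $\id_X \otimes \rho_A$, and $\rho_X \otimes \id_A$ in the $\{\ket{x}\}$ basis on $\cH_X$. If $X$ were quantum, this reduction would fail and neither inequality need hold in general, so the statement is best read as a quantitative manifestation of the fact that classical side information admits a copy of itself on each fiber.
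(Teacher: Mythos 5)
Your proof is correct and follows essentially the same route as the paper: you verify the two L\"{o}wner inequalities block-by-block in the classical $\{\ket{x}\}$ basis, using $P_X(x)\rho^x_A \le \rho_A$ for the first and $\rho^x_A \le \id_A$ for the second, which is exactly the paper's argument phrased as a PSD check on each block rather than a chain of operator inequalities. No changes needed.
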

\begin{proof}
    For the first inequality consider,
    \begin{align*}
        \rho_{XA} &= \sum_x p_x \ketbra{x} \otimes \rho^x_A \\
        &\leq \sum_x \ketbra{x} \otimes \rho_A \\
        &= \id_X \otimes \rho_{A}.
    \end{align*}
For above note that $\sum_x p_x \rho^x_A = \rho_A$ and hence for all $x: p_x \rho^x_A \leq \rho_A$.

For the second inequality consider,
   \begin{align*}
        \rho_{XA} &= \sum_x p_x \ketbra{x} \otimes \rho^x_A \\
        &\leq \sum_x p_x \ketbra{x} \otimes \id_A \\
        &= \rho_X \otimes \id_{A}.
    \end{align*}
\end{proof}
\suppress{
\begin{defn}[Measurement result]
For a measurement $M$ with POVM elements $\M=\{M_i\}$ and a quantum state $\rho$, we define the measurement result $M(\rho)$ as a random variable whose value is $i$ with probability $\tr(M_i \rho)$, i.e. $ P_{M(\rho)}(i)=\Pr(M(\rho) =i) =\tr(M_i \rho) $.
\end{defn}
}
\begin{fact}[Monotonicity]\label{fact:mono}
	Let $\rho_{XA}$ be a cq-state ($X$ classical) and $\cE: \mathcal{L}(\cH_A)\rightarrow \mathcal{L}(\cH_B)$ be a CPTP map. Then,
	$$\mI(X:B)_{\cE(\rho)} \leq  \mI(X:A)_\rho  \leq \log{|A|}.$$
\end{fact}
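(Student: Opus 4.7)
The plan is to prove both inequalities directly from the definition of max-information, leveraging only the two elementary ingredients already set up in the excerpt: the definition $\imax(X{:}A)_{\rho} = \inf_{\sigma_A}\dmax{\rho_{XA}}{\rho_X \otimes \sigma_A}$ and Fact~\ref{rhoablessthanrhoaidentity} (which gives $\rho_{XA} \le \rho_X \otimes \id_A$). No heavy machinery is needed.

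For the first inequality (data processing), I would fix an arbitrary $\lambda > \imax(X{:}A)_{\rho}$ together with a state $\sigma_A$ that witnesses it, so that $\rho_{XA} \le 2^{\lambda}\, \rho_X \otimes \sigma_A$ in the L\"owner order. The CPTP map $\id_X \otimes \cE$ is in particular completely positive and hence preserves the L\"owner order, so applying it to both sides yields
\[
(\id_X \otimes \cE)(\rho_{XA}) \;\le\; 2^{\lambda}\, \rho_X \otimes \cE(\sigma_A).
\]
Since $\cE$ is trace preserving, $\cE(\sigma_A)$ is still a legitimate state on $\cH_B$, and the marginal on $X$ of $(\id_X \otimes \cE)(\rho_{XA})$ is still $\rho_X$. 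Therefore this state is an admissible witness in the infimum defining $\imax(X{:}B)_{\cE(\rho)}$, giving $\imax(X{:}B)_{\cE(\rho)} \le \lambda$. Taking $\lambda \downarrow \imax(X{:}A)_\rho$ gives the first inequality.

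For the second inequality, I would invoke Fact~\ref{rhoablessthanrhoaidentity} in the form $\rho_{XA} \le \rho_X \otimes \id_A$. Rewriting $\id_A = |A|\, U_A$, where $U_A$ is the maximally mixed state on $\cH_A$, this reads
\[
\rho_{XA} \;\le\; |A| \cdot \rho_X \otimes U_A \;=\; 2^{\log|A|}\, \rho_X \otimes U_A,
\]
so $\dmax{\rho_{XA}}{\rho_X \otimes U_A} \le \log|A|$. Since $U_A$ is one particular choice of $\sigma_A$, the infimum defining max-information is no larger, giving $\imax(X{:}A)_{\rho} \le \log|A|$ as required.

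Neither step presents a real obstacle; the only thing to be slightly careful about is taking the infimum in the definition of $\imax$ properly (hence the argument with arbitrary $\lambda$ above $\imax$ and then a limit), and noting that the channel $\id_X \otimes \cE$ acts trivially on $\rho_X$ so the output is still of the form $\rho_X \otimes (\text{state on }B)$, which is needed for it to be an admissible witness in the infimum on the $B$ side.
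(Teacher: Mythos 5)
Your proof is correct and follows essentially the same route as the paper: apply the Löwner-order-preserving map $\id_X \otimes \cE$ to the defining inequality of $\imax(X{:}A)_\rho$ to get the data-processing step, and use $\rho_{XA} \le \rho_X \otimes \id_A = 2^{\log|A|}\rho_X \otimes U_A$ for the dimension bound. The only difference is your (slightly more careful) handling of the infimum via arbitrary $\lambda$ and a limit, where the paper simply takes a witnessing $\sigma_A$ at the value $c = \imax(X{:}A)_\rho$.
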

\begin{proof}
    Let $c = \mI(X:A)_\rho$ and $\sigma_A$ be a state such that $\rho_{XA} \leq 2^c \rho_X \otimes \sigma_A$. Since $\cE$ preserves positivity,
    $(\id_X \otimes \cE)(\rho_{XA}) \leq 2^c \rho_X \otimes \cE(\sigma_A)$ which gives the first inequality.
    
    From Fact~\ref{rhoablessthanrhoaidentity}, 
    $\rho_{XA} \le \rho_X \otimes \id_A=2^{\log|A|}\rho_X \otimes U_A$.  The second inequality now follows from the definition of $\mI$.

\end{proof}
\suppress{
\begin{fact}\label{fact:bound}
 Let $\rho_{XBD}$ be a cq-state ($XD$ classical) such that $\rho_{XB} = \rho_X \otimes \rho_B$. Then, $$ \mI(X:BD)_\rho \leq  \log{(|D|)}.$$
\end{fact}
\begin{proof}
    Using Fact~\ref{rhoablessthanrhoaidentity}, we have $\rho_{XBD} \leq \rho_{XB} \otimes \id_D$. Since, $\rho_{XB} =\rho_X \otimes \rho_B$, we further have 
    $$\rho_{XBD} \leq  \rho_{X}\otimes \rho_{B} \otimes \id_D=2^{\log (\vert D \vert)} \rho_{X} \otimes \rho_B \otimes  U_D.$$ 
    Thus,
    $$\dmax{\rho_{XBD}}{\rho_{X}\otimes \rho_B \otimes U_{D}} \leq \log (\vert D \vert).$$ Noting,  $\imax(X : BD)_\rho = \inf_{\sigma_{BD}} \dmax{\rho_{XBD}}{\rho_{X}\otimes\sigma_{BD}}  \leq \dmax{\rho_{XBD}}{\rho_{X}\otimes \rho_B \otimes U_{D}}$, we have the desired  $\imax(X : BD)_\rho \leq \log (\vert D \vert).$
\end{proof}}
\begin{fact}[Naimark's theorem]\label{fact:naimark}
For a measurement $\M=\{M_1, M_2, \ldots, M_t\}$ and a quantum state $\rho_A$, there exists a unitary $U:\cH_{AZ} \rightarrow \cH_{AZ}$ such that $\vert Z \vert =t,$ and $ \Tr(M_i (\rho_A \otimes \ketbra{0}) M^\dagger_i)  = \Tr( (\id \otimes \ketbra{i} ) (U (\rho_A \otimes \ketbra{0}) U^\dagger) )  =  \Pr(Z=i)_{U (\rho_A \otimes \ketbra{0}) U^\dagger}$, for every $i \in [t].$
\end{fact}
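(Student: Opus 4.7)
The plan is to construct the unitary $U$ explicitly via the standard Stinespring/Naimark dilation and then read off the measurement statistics. First I would introduce an ancillary register $Z$ with $\vert Z \vert = t$ and orthonormal basis $\{\ket{i}\}_{i=1}^{t}$, and define the linear map $V:\cH_A \to \cH_{AZ}$ by
\[
V\ket{\psi} \;\defeq\; \sum_{i=1}^{t} (M_i \ket{\psi}) \otimes \ket{i}.
\]
Then I would verify that $V$ is an isometry: for arbitrary $\ket{\psi}, \ket{\varphi} \in \cH_A$,
\[
\bra{\varphi} V^\dagger V \ket{\psi} = \sum_{i,j} \bra{\varphi} M_j^\dagger M_i \ket{\psi} \braket{j}{i} = \sum_i \bra{\varphi} M_i^\dagger M_i \ket{\psi} = \braket{\varphi}{\psi},
\]
using the normalization condition $\sum_i M_i^\dagger M_i = \id$.

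Next I would promote $V$ to a unitary $U$ on $\cH_{AZ}$. Concretely, the assignment $U(\ket{\psi} \otimes \ket{0}) \defeq V\ket{\psi}$ defines $U$ isometrically on the subspace $\cH_A \otimes \ket{0}$ of $\cH_{AZ}$; I would then extend $U$ to the orthogonal complement of this subspace by mapping any orthonormal basis of that complement to any orthonormal basis of the complement of $V(\cH_A)$ in $\cH_{AZ}$ (both spaces have equal dimension $\vert A \vert(t-1)$, so such a choice exists). The resulting $U$ is unitary on all of $\cH_{AZ}$ by construction.

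Finally I would compute the measurement statistics. For each $i \in [t]$, since $U(\ket{\psi} \otimes \ket{0}) = V\ket{\psi}$, linearity gives $U(\rho_A \otimes \ketbra{0}) U^\dagger = V \rho_A V^\dagger$, hence
\[
\Tr\bigl((\id \otimes \ketbra{i})\, U(\rho_A \otimes \ketbra{0}) U^\dagger \bigr) = \Tr\bigl((\id \otimes \ketbra{i})\, V \rho_A V^\dagger \bigr)
= \Tr\bigl(M_i \rho_A M_i^\dagger\bigr),
\]
where the last equality follows by expanding $V\rho_A V^\dagger = \sum_{i,j} M_i \rho_A M_j^\dagger \otimes \ket{i}\bra{j}$ and tracing against $\id \otimes \ketbra{i}$. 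Observing that $\Tr(M_i(\rho_A \otimes \ketbra{0}) M_i^\dagger) = \Tr(M_i \rho_A M_i^\dagger)$ (using that $M_i$ as stated acts on $\cH_A$ tensored with the trivial $\ket{0}$) completes the chain of equalities, and the rightmost expression is precisely $\Pr(Z=i)$ for the state $U(\rho_A \otimes \ketbra{0}) U^\dagger$. There is no genuine obstacle here; the only subtlety is the book-keeping for the unitary extension, which is handled by a dimension count on the orthogonal complement.
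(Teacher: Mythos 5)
Your construction is correct: the map $V\ket{\psi}=\sum_i (M_i\ket{\psi})\otimes\ket{i}$ is an isometry precisely because $\sum_i M_i^\dagger M_i=\id$, the unitary extension goes through by the dimension count you give, and the trace computation yields $\Tr\bigl((\id\otimes\ketbra{i})\,U(\rho_A\otimes\ketbra{0})U^\dagger\bigr)=\Tr(M_i\rho_A M_i^\dagger)$, which matches the left-hand side once one uses the paper's convention that $M_i$ stands for $M_i\otimes\id_Z$ when acting on $\cH_{AZ}$. The paper states this fact without proof, as a standard result; your argument is exactly the canonical Naimark/Stinespring dilation that such a proof would use, so there is nothing further to reconcile. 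The only cosmetic point is the labeling of the ancilla's initial state $\ket{0}$ versus the outcome basis $\{\ket{i}\}_{i\in[t]}$, which is harmless bookkeeping as you note.
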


\begin{fact}[Lemma B.7 in~\cite{berta2011quantum}]\label{fact:imax-dim-bound} For a state  $\rho_{XY}$, $\imax(X:Y)_\rho \leq 2 \log \min(|X|,|Y|)$.     
\end{fact}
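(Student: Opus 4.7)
The plan is to establish two separate bounds, $\imax(X:Y)_\rho \leq 2\log|Y|$ and $\imax(X:Y)_\rho \leq 2\log|X|$, and then take their minimum. The first I will prove directly by exhibiting a good choice of $\sigma_Y$; the second I will obtain by purifying $\rho_{XY}$ and applying the first bound together with monotonicity of $\imax$ under the CPTP map $\tr_R$. (The monotonicity argument of Fact~\ref{fact:mono} never actually uses classicality of $X$, so it holds for general bipartite $\rho_{XA}$.)

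For the direct bound, I will in fact prove the stronger statement $\imax(X:Y)_\rho \leq 2\log r_Y$, where $r_Y = \rank(\rho_Y)$, by choosing $\sigma_Y = \Pi_Y / r_Y$ (the normalized projector onto $\supp(\rho_Y)$) and establishing the operator inequality $\rho_{XY} \leq r_Y\, \rho_X \otimes \Pi_Y$, which is equivalent to $\rho_{XY} \leq r_Y^2\, \rho_X \otimes \sigma_Y$. Let $\{\ket{f_j}\}_{j=1}^{|Y|}$ be an eigenbasis of $\rho_Y$ with the first $r_Y$ vectors spanning $\supp(\rho_Y)$, and expand $\rho_{XY} = \sum_{j,k} A_{jk} \otimes \ket{f_j}\bra{f_k}$ with $A_{jk}$ operators on $\cH_X$. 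First, $A_{jk} = 0$ whenever $j > r_Y$ or $k > r_Y$: for such $j$, $\tr(A_{jj}) = \bra{f_j}\rho_Y\ket{f_j} = 0$, so $A_{jj} \geq 0$ forces $A_{jj} = 0$, and positivity of the block operator $[A_{jk}]$ then kills the entire corresponding row and column. Next, I factor $[A_{jk}]_{j,k \leq r_Y} = V^{\dagger} V$ with $V = [V_1, \ldots, V_{r_Y}]$ a block row of operators into some auxiliary space, and for any unit $\ket{v} = \sum_j \ket{v_j}\ket{f_j}$ in $\cH_X \otimes \supp(\rho_Y)$ compute
\[
\bra{v}\rho_{XY}\ket{v} \;=\; \Bigl\|\textstyle\sum_k V_k \ket{v_k}\Bigr\|^2 \;\leq\; r_Y \sum_k \|V_k \ket{v_k}\|^2 \;=\; r_Y \sum_k \bra{v_k} A_{kk} \ket{v_k} \;\leq\; r_Y \,\bra{v}(\rho_X \otimes \Pi_Y)\ket{v},
\]
using Cauchy--Schwarz on the sum of $r_Y$ vectors for the first inequality, and the marginal identity $\sum_j A_{jj} = \rho_X$ together with $A_{kk} \geq 0$ for the last.

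For the $2\log|X|$ bound, I take any purification $\ket{\psi}_{XYR}$ of $\rho_{XY}$, regard $\ket{\psi}\bra{\psi}$ as a bipartite state on $X$ and $YR$, and apply the bound just proved (in the form $\imax \leq 2\log r$) to get $\imax(X:YR)_{\psi} \leq 2\log \rank(\rho_{YR})$. Purity forces $\rank(\rho_{YR}) = \rank(\rho_X) \leq |X|$, hence $\imax(X:YR)_{\psi} \leq 2\log|X|$. Monotonicity of $\imax$ under $\tr_R$ then yields $\imax(X:Y)_\rho \leq \imax(X:YR)_{\psi} \leq 2\log|X|$. Combining with the first bound gives $\imax(X:Y)_\rho \leq 2\log\min(|X|,|Y|)$.

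The main technical step is the Cauchy--Schwarz estimate together with the vanishing of off-support blocks; once that is in hand, the passage from $|Y|$ to $|X|$ is a one-line application of monotonicity via purification, and I do not anticipate further obstacles beyond careful handling of the possibly non-invertible supports of $\rho_X$ and $\rho_Y$.
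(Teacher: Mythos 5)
Your proof is correct. Note that the paper itself gives no argument for this fact; it is imported verbatim as Lemma B.7 of \cite{berta2011quantum}, so there is no internal proof to compare against, and what you have produced is a legitimate self-contained replacement for the citation. The core step checks out: in an eigenbasis of $\rho_Y$ the blocks $A_{jk}$ with an index outside $\supp(\rho_Y)$ vanish (trace-zero PSD diagonal block, then positivity of the block matrix), the Gram factorization $A_{jk}=V_j^\dagger V_k$ plus Cauchy--Schwarz over the $r_Y$ terms gives $\bra{v}\rho_{XY}\ket{v}\le r_Y\sum_k\bra{v_k}A_{kk}\ket{v_k}$, and $A_{kk}\le\sum_j A_{jj}=\rho_X$ yields $\rho_{XY}\le r_Y\,\rho_X\otimes\Pi_Y=r_Y^2\,\rho_X\otimes\sigma_Y$, hence $\imax(X:Y)_\rho\le 2\log\rank(\rho_Y)$; the $2\log|X|$ side then follows from purification, $\rank(\rho_{YR})=\rank(\rho_X)$, and monotonicity of $\imax$ under $\tr_R$. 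You are also right that the paper's proof of Fact~\ref{fact:mono} (first inequality) never uses classicality of $X$, so invoking it for a general bipartite state is fine; the only shared caveat is that it presumes the infimum in Definition~\ref{def:maxinfo} is attained, which holds in finite dimension (or can be sidestepped by an $\eps$-argument). Two cosmetic points: the operator inequality should be stated on all of $\cH_X\otimes\cH_Y$, which is immediate since both sides vanish off $\cH_X\otimes\supp(\rho_Y)$; and your argument in fact proves the slightly stronger bound $\imax(X:Y)_\rho\le 2\log\min(\rank(\rho_X),\rank(\rho_Y))$, of which the stated fact is the special case $\rank\le\dim$.
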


\begin{fact}\label{fact:boundnew}
    Let $\rho_{XBD}$ be a cq-state ($X$ classical) such that $\rho_{XB} = \rho_X \otimes \rho_B$. Then, $$ \mI(X:BD)_\rho \leq  2\log{(|D|)}.$$
\end{fact}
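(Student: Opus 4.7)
The plan is to reduce the claim to the dimension bound on max-information (Fact~\ref{fact:imax-dim-bound}) applied to a cleverly chosen bipartition. First, I would view $\rho_{XBD}$ under the partition $(XB):D$ and invoke Fact~\ref{fact:imax-dim-bound}, which yields
\[
\imax(XB:D)_\rho \;\le\; 2\log\min(|XB|,|D|) \;\le\; 2\log|D|.
\]
By the definition of $\imax$, this guarantees a state $\sigma_D \in \cD(\cH_D)$ with $\rho_{XBD} \le 2^{2\log|D|}\, \rho_{XB} \otimes \sigma_D$.

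Next, I would substitute the product hypothesis $\rho_{XB} = \rho_X \otimes \rho_B$ into this operator inequality to rewrite the dominating state as $2^{2\log|D|}\, \rho_X \otimes (\rho_B \otimes \sigma_D)$. Setting $\tau_{BD} \defeq \rho_B \otimes \sigma_D$, which is a legitimate element of $\cD(\cH_{BD})$, the inequality takes the form $\rho_{XBD} \le 2^{2\log|D|}\, \rho_X \otimes \tau_{BD}$, so $\dmax{\rho_{XBD}}{\rho_X \otimes \tau_{BD}} \le 2\log|D|$. The definition of $\imax(X:BD)_\rho$ as an infimum over states on $BD$ then gives the desired $\mI(X:BD)_\rho \le 2\log|D|$.

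No real obstacle is anticipated; the only subtle point is the choice of partition. Applying Fact~\ref{fact:imax-dim-bound} directly to the partition $X:BD$ would only yield the bound $2\log\min(|X|,|BD|)$, which can be vastly larger than $2\log|D|$. The key observation is that bounding $\imax(XB:D)$ instead already produces a product-state bound with dimension factor $|D|$, and the independence $\rho_{XB} = \rho_X \otimes \rho_B$ then allows the dominating product state $\rho_{XB}\otimes \sigma_D$ to be regrouped into the form $\rho_X \otimes \tau_{BD}$ demanded by the definition of $\imax(X:BD)$.
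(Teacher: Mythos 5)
Your proposal is correct and follows essentially the same route as the paper's own proof: apply the dimension bound to the $(XB):D$ cut, extract a dominating state $\rho_{XB}\otimes\sigma_D$, and use the product hypothesis to regroup it as $\rho_X\otimes(\rho_B\otimes\sigma_D)$ before invoking the definition of $\imax(X:BD)$. No further comment is needed.
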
 
\begin{proof}
    From Fact~\ref{fact:imax-dim-bound}, we have  $\imax(XB : D)_\rho \leq 2 \log(\vert D \vert)$. Using Definition~\ref{def:maxinfo}, there exist a $\sigma_D$ such that 
    $$\rho_{XBD} \leq 2^{2  \log(\vert D \vert)} \rho_{XB} \otimes \sigma_D.$$
    Since $\rho_{XB} = \rho_X \otimes \rho_B$, we have
    $$\rho_{XBD} \leq 2^{2 \log(\vert D \vert)} \rho_{X} \otimes (\rho_B \otimes \sigma_D).$$
    Defining $\sigma_{BD}=\rho_B \otimes \sigma_D$, we have
    $$\rho_{XBD} \leq 2^{2 \log(\vert D \vert)} \rho_{X} \otimes (\sigma_{BD}).$$ From Definition~\ref{def:maxinfo}, we have the desired, $ \mI(X:BD)_\rho \leq  2\log{(|D|)}.$
\end{proof}

%
%





\begin{defn}[Projector on Hilbert space]\label{def:ProjHilbert}

Let $\cH$ be a Hilbert space with a basis $\{ v_i\}$. The projector on $\cH$ is defined as:   \[\mathsf{Proj}(\cH) \defeq \sum_{i} \ketbra{v_i}. \] 
		
\end{defn}
\begin{defn}[Guessing probability]Given a cq-state, $\rho_{XQ}=\sum_x p_x \proj{x} \otimes \rho^x_Q $, we often want to guess $X$ by doing a measurement on the quantum register $Q$. If we do so by a measurement $\M$ with POVM elements $\{E_x\}$, its success probability averaged over $X$ is \[\Pr[X=\M(Q)]= \sum_x p_x \trp{E_x \rho^x_Q}. \]We use $p^{opt}_g(X | Q)_\rho$ to denote the maximum probability over all measurements $\M$, i.e.
$$p^{opt}_g(X|Q)_\rho \defeq \max_{\M} \{\Pr[X=\M(Q)]\}.$$
\end{defn}

\begin{defn}[Pretty good measurement~(PGM)]
	
		For a cq-state, 
	$\rho_{XQ}=\sum_x p_x \proj{x} \otimes \rho^x_Q,$ define 
\ba
A_x=p_x \rho^x_Q, \quad \, A=\sum_x A_x. \nn
\ea
The pretty good measurement (PGM)  is the measurement  $\M^{pgm}_X$ with POVM elements $\{E^{pgm}_{x} = A^{-1/2} A_xA^{-1/2}\}$. We denote
\begin{align*}
    & p^{pgm}_{g}(X|Q)_\rho\\
    & \defeq  \sum_x p_x\trp{E^{pgm}_x \rho^x_Q} \\
    & =\Pr[X=\M^{pgm}_X(Q)]. 
\end{align*}
\end{defn}
 
%
%
%
%

\begin{fact}[Optimality of PGM~\cite{renes2017better}]\label{thm:pgm-opt}
	For any cq-state  
$\rho_{XQ}=\sum_x p_x \proj{x} \otimes \rho^x_Q,$ we have 
$$  p^{pgm}_{g}(X|Q)_\rho  \geq g(p^{opt}_{g}(X|Q)_\rho),$$where $g(x)=x^2+(1-x)^2/(d-1)$ and $d$ is the dimension of the register $X$.
\end{fact}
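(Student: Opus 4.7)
The plan is to prove the Barnum--Knill inequality $p^{pgm}_g \geq (p^{opt}_g)^2$ by a Hilbert--Schmidt Cauchy--Schwarz argument, and then upgrade it to $g(p^{opt}_g)$ using the dimensional refinement of Renes. Set $A_x = p_x \rho^x_Q$ and $A = \sum_x A_x$, and work on $\mathrm{supp}(A)$ so that $A^{-1/2}$ is defined. Let $\{M_x\}$ be any POVM realizing $p^{opt}_g = \sum_x \tr(M_x A_x)$, and recall $E^{pgm}_x = A^{-1/2} A_x A^{-1/2}$.

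For the first step I would use the identity $A_x = A^{1/2} E^{pgm}_x A^{1/2}$, which rewrites each summand as $\tr(M_x A_x) = \tr\!\bigl((A^{1/4} M_x A^{1/4})(A^{1/4} E^{pgm}_x A^{1/4})\bigr)$. Applying Cauchy--Schwarz in the Hilbert--Schmidt inner product for each $x$ and then across $x$ yields
\[
(p^{opt}_g)^2 \;\leq\; \Bigl(\sum_x \tr\bigl((A^{1/4} M_x A^{1/4})^2\bigr)\Bigr) \cdot \Bigl(\sum_x \tr\bigl((A^{1/4} E^{pgm}_x A^{1/4})^2\bigr)\Bigr).
\]
Expanding gives $\sum_x \tr\bigl((A^{1/4} E^{pgm}_x A^{1/4})^2\bigr) = \sum_x \tr(E^{pgm}_x A_x) = p^{pgm}_g$, and using $M_x \leq \id$ together with operator monotonicity of the trace gives $\tr\bigl((A^{1/4} M_x A^{1/4})^2\bigr) \leq \tr(M_x A)$, so the first factor is at most $\tr(A) = 1$. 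This already delivers $(p^{opt}_g)^2 \leq p^{pgm}_g$.

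For the second step, to extract the additive term $(1-p^{opt}_g)^2/(d-1)$, I would decompose $p^{pgm}_g$ into its diagonal contribution $\sum_x \tr(E^{pgm}_x A_x)$ and the off-diagonal error $\sum_{x \neq x'} \tr(E^{pgm}_{x'} A_x) = 1 - p^{pgm}_g$, and apply Cauchy--Schwarz to the error terms while crucially using that they are indexed by the $d-1$ wrong labels. A Jensen-type averaging on these $d-1$ incorrect branches, combined with the fact that the optimal measurement fails with probability $1-p^{opt}_g$, bounds the residual contribution below by $(1-p^{opt}_g)^2/(d-1)$; adding this to the Barnum--Knill diagonal bound produces $g(p^{opt}_g)$.

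The main obstacle is extracting the $1/(d-1)$ factor cleanly: standard Cauchy--Schwarz sees only the diagonal and thus only delivers $(p^{opt}_g)^2$, so the refinement requires a more delicate analysis exploiting the rank-$d$ structure of the classical register $X$ and the spread of the PGM error across incorrect labels. Since the sharp constant is the main theorem of~\cite{renes2017better}, a self-contained write-up in this paper would present the Barnum--Knill outline above and then appeal directly to~\cite{renes2017better} for the final dimensional refinement.
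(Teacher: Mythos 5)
The paper never proves this statement: it is imported verbatim as a Fact with a citation to~\cite{renes2017better}, so there is no internal proof to match your argument against, and your closing suggestion (appeal to~\cite{renes2017better} for the sharp form) is exactly what the paper does. Your first step is fine: the identity $A_x = A^{1/2}E^{pgm}_x A^{1/2}$, the Hilbert--Schmidt Cauchy--Schwarz, the evaluation $\sum_x \tr\bigl((A^{1/4}E^{pgm}_x A^{1/4})^2\bigr)=\sum_x\tr(E^{pgm}_xA_x)$ and the bound $\tr\bigl((A^{1/4}M_xA^{1/4})^2\bigr)\le\tr(M_xA)$ are all correct, so you do obtain Barnum--Knill, $(p^{opt}_g)^2\le p^{pgm}_g$.

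The gap is the second step, which as written is not an argument. The ``decomposition'' you describe is vacuous: the diagonal contribution $\sum_x\tr(E^{pgm}_xA_x)$ is already all of $p^{pgm}_g$, and the off-diagonal terms $\sum_{x\neq x'}\tr(E^{pgm}_{x'}A_x)=1-p^{pgm}_g$ decompose the number $1$, not $p^{pgm}_g$. More importantly, the refined bound cannot be obtained by ``adding'' a separately derived $(1-p^{opt}_g)^2/(d-1)$ to the Barnum--Knill bound: your Cauchy--Schwarz step has already consumed the entire quantity $p^{pgm}_g$ on the right-hand side, so any extra additive term needs to come out of the same inequality, not be appended afterwards. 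A genuinely self-contained route does exist along your lines, but it must modify the Cauchy--Schwarz itself, e.g.\ pair $A^{1/4}M_xA^{1/4}$ against a linear combination of $A^{1/4}E^{pgm}_xA^{1/4}$ and the averaged wrong-outcome operator $\frac{1}{d-1}\sum_{x'\neq x}A^{1/4}E^{pgm}_{x'}A^{1/4}$ and optimize over the coefficient; the cross terms then produce exactly the $(1-p^{opt}_g)^2/(d-1)$ correction. As submitted, your write-up proves only the weaker $(p^{opt}_g)^2$ bound and relies on the citation for the statement actually used (note also that in the paper's application the relevant $d$ is $|\Z|$, the alphabet of $Z^y=f(X,y)$, not the size of Alice's input register), which is acceptable for this paper but means the ``upgrade'' paragraph should either be fixed as above or dropped in favor of the plain citation.
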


Note that $g(x)$ is convex everywhere and
increasing when $x\in [1/d,1]$, and $g(1/d)=1/d$. Also, note the bound from Fact~\ref{thm:pgm-opt} is better than the optimality bound of Barnum and Knill~\cite{pgm-bk} when the guessing probability is close to $1/d$.

%
\subsection*{One-way  communication complexity} \label{subsect:comm_complex}
In this paper we only consider the two party one-way model of communication. Let $f: \X \times \Y \to  \Z \cup \{ \bot \} $ be a partial function, $\mu$  be a distribution on $f^{-1}(\Z)$ and $\epsilon \geq 0$. Let $\mu_X$ represent the marginal of $\mu$ on $\X$. In a two party one-way communication protocol $\cP$, Alice with input $x \in \X$ communicates a message to Bob with input $y \in \Y$. On receiving Alice's message, Bob produces output of the protocol $\cP(x,y)$. 

In a one-way classical communication protocol, Alice and Bob are allowed to use public and private  randomness (independent of the inputs).\suppress{ By default all classical protocols considered are public-coin protocols.} In a one-way quantum communication protocol, Alice and Bob are allowed to do quantum operations and Alice can send a quantum message (qubits) to Bob. In an entanglement-assisted protocol, Alice and Bob start with a shared pure state (independent of the inputs) and Alice communicates a quantum message to Bob. 

 Let $\cP$ represent a one-way communication protocol.
\begin{defn}
    \begin{enumerate}
        \item $\err_{x,y}(\cP,f) \defeq \Pr(\cP(x,y) \neq f(x,y)),$~\footnote{For $(x,y) \notin f^{-1}(\Z) : \err_{x,y}(\cP,f)=0$.}
        \item $\err(\cP,f) \defeq \max_{x,y}\{ \err_{x,y}(\cP,f)\},$
        \item $\err(\cP,f,\mu) \defeq \E_{(x,y) \leftarrow \mu}[\err_{x,y}(\cP,f)],$
        \item $ \err(\cP,f,\mu_X) \defeq \max_y\{\E_{x \leftarrow \mu_X} [\err_{x,y}(\cP,f)]\}. $
        \item $\cc(\cP)$ be the maximum number of (qu)bits communicated in $\cP$. 
    \end{enumerate}
\end{defn}
\begin{defn} Let $\cP$ represent a classical public-coin protocol.
    \begin{enumerate}
        \item  $\sfR^{1}_{\eps}(f) \defeq \min\{\cc(\cP)~|~ \err(\cP,f)\le \eps\}$.
        \item  $\sfR^{1, \mu}_{\eps}(f) \defeq \min\{\cc(\cP)~|~\err(\cP,f,\mu) \le \eps\}$.
\item $\sfR^{1, \mu_X}_{\eps}(f)\defeq \min\{\cc(\cP)~|~\err(\cP,f,\mu_X) \le \eps\}$.
    \end{enumerate}
\end{defn}

Intuitively, $\sfR^{1}_{\eps}(f)$ is the classical communication complexity for worse case $(x,y)$, $\sfR^{1, \mu}_{\eps}(f)$ is the classical communication complexity with $(x,y)$ averaged over the distribution $\mu$, and
$\sfR^{1, \mu_X}_{\eps}(f)$ is the classical communication complexity with worst case $y$ while $x$ is averaged over the distribution $\mu$.
%
%

\begin{defn}
    \begin{enumerate}
        \item  The quantum one-way communication complexities $\sfQ^{1}_{\eps}(f)$, $\sfQ^{1, \mu}_{\eps}(f)$ and  $\sfQ^{1, \mu_X}_{\eps}(f)$ are defined similarly, by considering $\cP$ to be a quantum one-way protocol.   
        \item The entanglement-assisted quantum one-way communication complexities  $\sfQ^{1,*}_{\eps}(f)$, $\sfQ^{1,\mu,*}_{\eps}(f)$, and $\sfQ^{1, \mu_X,*}_{\eps}(f)$ are defined similarly, by considering $\cP$ to be an entanglement-assisted quantum one-way protocol.  

    \end{enumerate}
\end{defn}
The following result due to Yao~\cite{Yao77} is a very useful fact connecting worst-case and distributional communication complexities.
\begin{fact}[Yao's Principle~\cite{Yao77}]\label{yaofact1}
\[\sfR^{1}_{\eps}(f) = \max_{\mu} \sfR^{1, \mu}_{\eps}(f) \quad ; \quad \]
\[ \sfQ^{1,*}_{\eps}(f) = \max_{\mu} \sfQ^{1, \mu,*}_{\eps}(f). \]
\end{fact}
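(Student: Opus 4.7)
The plan is to invoke von Neumann's (and for the quantum case, Sion's) minimax theorem, applied to a two-player zero-sum game between the protocol designer and an adversary who picks inputs. The easy direction $\max_\mu \sfR^{1,\mu}_\eps(f) \leq \sfR^{1}_\eps(f)$ is by averaging: any protocol $\cP$ with worst-case error $\leq \eps$ satisfies $\err(\cP,f,\mu) = \E_{(x,y)\sim\mu}[\err_{x,y}(\cP,f)] \leq \eps$ for every $\mu$ supported on $f^{-1}(\Z)$, so the same $\cP$ witnesses $\sfR^{1,\mu}_\eps(f) \leq \cc(\cP)$ for all $\mu$.

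For the reverse direction, let $c = \max_\mu \sfR^{1,\mu}_\eps(f)$, and consider the finite matrix game whose rows are indexed by deterministic one-way protocols of communication at most $c$ (finitely many, as $\X$, $\Y$, and the message alphabet are finite) and whose columns are indexed by pairs $(x,y)\in f^{-1}(\Z)$, with entries $\err_{x,y}(\cP,f) \in\{0,1\}$. Two observations bridge the game-theoretic value to communication complexity. First, a public-coin randomized protocol of cost $c$ is exactly a distribution over deterministic protocols of cost $c$, so $\sfR^{1}_\eps(f) \leq c$ holds iff some row-mixed strategy has maximum column payoff $\leq \eps$. Second, a randomized protocol achieving distributional error $\leq \eps$ under $\mu$ can, by averaging over the public randomness, be taken deterministic; so the assumption on $c$ means that for every column-mixed strategy $\mu$ there is a deterministic row with expected payoff $\leq \eps$. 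Von Neumann's minimax theorem then equates the two value expressions
\[ \min_{\text{row mix}} \max_{(x,y)}\E[\err_{x,y}(\cP,f)] = \max_\mu \min_{\cP\ \text{det.}} \E_{\mu}[\err_{x,y}(\cP,f)], \]
and the right side is $\leq \eps$ by the choice of $c$, so the left side is too, yielding $\sfR^{1}_\eps(f) \leq c$.

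For the quantum equality $\sfQ^{1,*}_\eps(f) = \max_\mu \sfQ^{1,\mu,*}_\eps(f)$, the same outline applies with Sion's minimax theorem in place of von Neumann's. The set of entanglement-assisted quantum one-way protocols of communication $c$ is convex (shared entanglement supplies shared randomness, so mixtures of protocols are themselves protocols), and by bounding the dimensions of Alice's encoding, Bob's POVM, and the relevant part of the shared entanglement, one may take the strategy space to be compact. The error $\err_{x,y}(\cP,f)$ is a trace expression that is linear in $\cP$ for fixed $(x,y)$ and linear in $\mu$ for fixed $\cP$, so Sion's theorem applies and exchanges the min and max as before.

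The main obstacle I anticipate is the compactness/continuity setup in the quantum case: one must argue that restricting to bounded-dimensional auxiliary registers (ancilla for Alice, POVM dilation for Bob, and shared entanglement of bounded Schmidt rank) loses no generality when the communication is bounded, so that the a priori infinite-dimensional strategy space collapses to a compact convex one on which Sion's theorem applies. Apart from this technicality, the classical minimax template transfers verbatim.
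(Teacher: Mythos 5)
First, note that the paper does not prove this statement at all: it is stated as a Fact and attributed to Yao~\cite{Yao77}, so there is no in-paper argument to compare against; your proposal has to stand on its own. For the classical equality your outline is the standard and correct one. The easy direction by averaging is fine, and the hard direction is a legitimate finite two-player zero-sum game argument: rows are the finitely many deterministic one-way protocols of cost at most $c$ (finite because $\X$, $\Y$, $\Z$ and the message alphabet are finite), columns are pairs in $f^{-1}(\Z)$, a public-coin protocol is a mixture of rows (private coins can be folded into the public coin without increasing cost), fixing the best coin string derandomizes a distributional protocol, and von Neumann's minimax theorem closes the loop. No issues there.

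The entanglement-assisted equality is where there is a genuine gap, and it sits exactly at the point you flag but do not resolve: compactness of the strategy space. Because the shared entangled state may have unbounded dimension, it is not at all immediate that one can restrict to bounded-dimensional shared states, Alice ancillas and POVM dilations ``without loss of generality,'' and simply asserting this is the whole difficulty; Sion's theorem cannot be invoked until that is settled. A cleaner route, which avoids any dimension bound, is to apply the minimax/separating-hyperplane argument not to the space of protocols but to the set $S=\{(\err_{x,y}(\cP,f))_{(x,y)\in f^{-1}(\Z)} : \cc(\cP)\le c\}\subseteq[0,1]^{|f^{-1}(\Z)|}$, which is convex (shared entanglement supplies shared randomness, so mixtures of protocols are protocols) and whose closure is compact for free since the ambient space is finite dimensional. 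But then the conclusion is only that some point of $\overline{S}$ has all coordinates at most $\eps$, i.e.\ you obtain protocols of cost $c$ with worst-case error at most $\eps+\delta$ for every $\delta>0$; upgrading this to error exactly at most $\eps$ (as the Fact asserts, since $\sfQ^{1,*}_{\eps}(f)$ is defined via protocols with error $\le\eps$) requires either an attainment/closedness argument for $S$ or a dimension bound on the entanglement after all. So the classical half of your proof is complete, while the quantum half needs this missing step made precise (or the statement weakened by an arbitrarily small slack in $\eps$), and the minimax machinery itself only needs bilinearity, not the full strength of Sion's theorem.
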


\begin{defn}[Column sparsity of a partial function]\label{def:sparsity} For a partial function $f: \X \times \Y \rightarrow \{0,1,\bot \}$, and for every $y \in \Y$, let $S^y_0= \{x|f(x,y)=0\}$,  $S^y_1= \{x|f(x,y)=1\}$. We define column sparsity ($\sfcs(f)$) of a partial function $f$ as follows:
	\[\sfcs(f) = \max_{y} \min \{ \vert S^y_0 \vert, \vert S^y_1 \vert \}.\]

\end{defn}

	


\suppress{
\begin{defn}[Message length of randomized one-way protocol~\cite{harsha2007communication}]
In a randomized one-way protocol, the two parties Alice and Bob share a random string $R$, and also have private random strings $R_A$ and $R_B$ respectively. Alice receives an input $x\in \cX$. Based on the shared random string $R$ and her own private random string $R_A$, she sends a message $M(x,R,R_A)$ to Bob. On receiving the message $M$, Bob computes the output $y=Y(M,R,R_B)$. The protocol is thus specified by the two functions $M(x,R,R_A)$ and $Y(M,R,R_B)$ and the distributions for the random strings $R,R_A,R_B$. For such a protocol $\Pi$, let $\Pi(x)$ denote its (random) output when the input given to Alice is $x$. We use $T_\Pi(x)$ to denote the expected length of the message transmitted by Alice to Bob, that is, 

$$T_\Pi(x)=\E[|M(x,R,R_A)|].$$

Note that the private random strings can be considered part of the shared random string if we are not concerned about minimizing the amount of shared randomness.
\end{defn}

\begin{defn}[Average communication complexity of correlation~\cite{harsha2007communication}]
Given random variables $(X,Y)$, let 

$$T^R_\lambda(X:Y) \equiv \min_\Pi \E_{x \xleftarrow{}X}[T_\Pi(x)],$$
where $\Pi$ ranges over all one-way protocols where $(X,\Pi(X))$ is $\lambda$-close in total variation distance to the distribution $(X,Y)$. For the special case when $\lambda=0$, we define 
$$T^R(X:Y) \equiv T^R_0(X:Y)$$
\end{defn}
}
\begin{defn}[Markov-chain]
	\label{def:mc}
	Let  $ABC$ be joint random variables. We say $ABC$ forms a Markov-chain iff $\I(A:C~|~B)=0$ and denote it by $A \leftrightarrow B \leftrightarrow  C$.
\end{defn}
\begin{fact}[Message-compression~\cite{JRS03,JRS05}]
\label{fact:compress} Let $XC$ be joint random variables and $\eta>0$. Define,
\begin{align*}
    &T_\eta(X:C) \\
    &\defeq \min_{R,M} \{\ell(M) ~|~ X \leftrightarrow (R,M) \leftrightarrow C'~;~ \\ 
  & \quad XR = X \otimes R~;~ XC \approx_\eta XC'\} .
\end{align*}
Above $\ell(M)$ represents the length (number of bits) of $M$. Let $(M,R)$ achieve the minimum above. This means that if Alice and Bob share the public random string $R$, Alice can, with input $X$, generate $M$ (using $R,X$) and send $M$ to Bob, who in turn can produce $C'$ (using $(M,R)$). The communication from Alice to Bob is $T_\eta(X:C)$. 
Furthermore,
$$T_\eta(X:C)\leq \mI(X:C) +  O(\log\log (1/\eta)).$$
\end{fact}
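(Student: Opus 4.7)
The plan is to prove the bound by classical rejection sampling from a distribution realizing the max-information. Write $k = \mI(X:C)$ and, using Definition~\ref{def:maxinfo} (and the remark that $\sigma$ may be taken classical when $X$ is classical), fix a distribution $\sigma_C$ such that $P_{XC}(x,c) \leq 2^{k}\, P_X(x)\, \sigma_C(c)$ for every $(x,c)$. Dividing by $P_X(x)$ shows that the acceptance ratio
$$\alpha_x(c) \;\defeq\; \frac{P_{C\mid X=x}(c)}{2^{k}\,\sigma_C(c)} \;\in\; [0,1]$$
is a valid probability pointwise, which is exactly what a rejection sampler needs.

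Next I would fix $L = \lceil 2^{k}\ln(1/\eta)\rceil$ and take the public random string to be
$R = (C_1, U_1, C_2, U_2, \ldots, C_L, U_L)$, with $C_i \stackrel{\mathrm{i.i.d.}}{\sim} \sigma_C$ and $U_i \stackrel{\mathrm{i.i.d.}}{\sim} \mathrm{Unif}[0,1]$, jointly independent of $X$. Alice's protocol on input $x$: scan $i=1,2,\ldots,L$ and send $M = i^{\star}$, the first index with $U_{i} \leq \alpha_x(C_i)$; if no such index exists, send a designated failure symbol $\bot$. Bob outputs $C' = C_{M}$ (a fixed default on $\bot$). The message lives in an alphabet of size $L+1$, so $\ell(M) \leq \lceil \log(L+1) \rceil = k + \log\log(1/\eta) + O(1)$, which is the claimed bound.

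The correctness analysis is the standard rejection-sampling calculation. The per-trial acceptance probability is
$\E_{C\sim \sigma_C}[\alpha_x(C)] = \sum_{c}\sigma_C(c)\,P_{C\mid X=x}(c)/(2^{k}\sigma_C(c)) = 2^{-k}$, so the overall failure probability is at most $(1-2^{-k})^{L} \leq e^{-L/2^{k}} \leq \eta$ by the choice of $L$. Conditioned on acceptance at any particular trial $i$, a direct computation gives that $C_{i}$ is distributed exactly as $P_{C\mid X=x}$; hence on the non-failure event $XC'$ has the same joint law as $XC$, giving $\tfrac12\|P_{XC}-P_{XC'}\|_1 \leq \eta$, i.e.\ $XC \approx_\eta XC'$. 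Independence $XR = X\otimes R$ holds by construction, and the Markov condition $X \leftrightarrow (R,M) \leftrightarrow C'$ is automatic because $C' = C_M$ is a deterministic function of $(R,M)$.

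The one step that needs care is the coupling argument behind rejection sampling (the precise claim that the accepted sample has the correct conditional law), which is the crux of the proof. Once this is in hand, the $O(\log\log(1/\eta))$ overhead over $\mI(X:C)$ is immediate: the exponential tail bound $(1-2^{-k})^{L} \leq e^{-L/2^{k}}$ lets $L$ scale only linearly in $\log(1/\eta)$, so $\log L$ contributes merely $\log\log(1/\eta) + O(1)$ beyond the $k$ bits genuinely demanded by the max-information.
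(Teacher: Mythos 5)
Your proposal is correct: the one-shot rejection-sampling protocol (sample $C_i\sim\sigma_C$ from shared randomness, accept with probability $P_{C\mid X=x}(C_i)/(2^{k}\sigma_C(C_i))$, send the index of the first acceptance among $L=\lceil 2^{k}\ln(1/\eta)\rceil$ trials) does achieve $\ell(M)\leq \mI(X:C)+\log\log(1/\eta)+O(1)$, with the Markov, independence and $\eta$-closeness conditions verified exactly as you argue. The paper itself does not prove this Fact but cites~\cite{JRS03,JRS05}, and your argument is essentially the standard compression-by-rejection-sampling technique underlying those references, specialized to the max-information (where the pointwise domination $P_{XC}\leq 2^{k}P_X\otimes\sigma_C$ makes the acceptance ratio valid without any substate-theorem machinery), so there is nothing to flag.
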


\suppress{
\begin{fact}[Message-compression~\cite{harsha2007communication}]\label{thm:mutual-info-average} Let $XC$ be joint random variables. Define,
$$ T(X:C) \defeq \min_{R,D} \{\ell(D) ~|~ X \leftrightarrow (R,D) \leftrightarrow C~;~ XR = X \otimes R\}.$$
Above $\ell(D)$ represents expected length (number of bits) of $D$. Let $(D,R)$ achieve the minimum above. This means that if Alice and Bob share the public random string $R$, Alice can, with input $X$, generate $D$ (using $R,X$) and send $D$ to Bob, who in turn can produce $C$ (using $(D,R)$). The expected communication from Alice to Bob is $T(X:C)$. 
Furthermore,
$$\I(X:C)\leq T(X:C)\leq \I(X:C)+2\log(\I(X:C)+1)+O(1).$$
\end{fact}
}

%
%
%

\begin{fact}[Markov's inequality] \label{fact:markov}
For any nonnegative random variable $X$ and real number $a>0$, 
$$\Pr[X\geq a]  \leq \frac{\E[X]}{a}. $$
\end{fact}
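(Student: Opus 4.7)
The plan is to prove Markov's inequality by the standard indicator-function comparison. First I would introduce the indicator random variable $\mathbf{1}_{X \geq a}$, which takes the value $1$ when $X \geq a$ and $0$ otherwise. Because $X$ is nonnegative and $a > 0$, I claim the pointwise inequality
\[
a \cdot \mathbf{1}_{X \geq a} \leq X
\]
holds everywhere on the sample space. On the event $\{X \geq a\}$ the left side equals $a$ and the right side is at least $a$, so the inequality holds; on the complementary event the left side is $0$ and the right side is nonnegative, so it again holds. This is the only step that requires any thought, and it is trivial once one writes down the two cases.

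Next I would take expectations of both sides, using linearity and monotonicity of expectation, to obtain
\[
a \cdot \Pr[X \geq a] = \E[a \cdot \mathbf{1}_{X \geq a}] \leq \E[X].
\]
Dividing by $a > 0$ yields the desired bound $\Pr[X \geq a] \leq \E[X]/a$. There is no real obstacle here; the only conceptual content is the observation that nonnegativity of $X$ is precisely what is needed to sandwich the scaled indicator below $X$ pointwise, and once that is in place the statement follows from one application of expectation-monotonicity and one division. Given that this is a textbook fact invoked as a black box in the sequel, I would keep the write-up to a few lines rather than elaborating further.
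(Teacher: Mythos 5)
Your argument is correct: the pointwise bound $a\cdot\mathbf{1}_{X\geq a}\leq X$ (valid exactly because $X\geq 0$ and $a>0$), followed by monotonicity of expectation and division by $a$, is the standard proof of Markov's inequality. The paper states this as a black-box fact with no proof of its own, so there is nothing to reconcile; your few-line write-up is entirely adequate.
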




\begin{fact}[]\label{factprojfrobrank}
	For a projector $\Pi$, we have
	$$ \Vert \Pi \Vert^2_F =\rank(\Pi).$$
\end{fact}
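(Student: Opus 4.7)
The plan is to reduce $\Vert \Pi \Vert_F^2$ to $\tr(\Pi)$ using the defining property of an (orthogonal) projector, and then to observe that $\tr(\Pi)$ counts the multiplicity of the eigenvalue $1$, which is the rank. Throughout I will read ``projector'' in the standard quantum information sense, so in addition to $\Pi^2=\Pi$ one has $\Pi^\dagger = \Pi$; this is consistent with how projectors are used elsewhere in the paper (e.g.\ inside measurements).

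First I would unpack the Frobenius norm from Definition~\ref{def:infoquant}: by definition, $\Vert \Pi \Vert_F^2 = \tr(\Pi^\dagger \Pi)$. Substituting $\Pi^\dagger = \Pi$ gives $\tr(\Pi^2)$, and then idempotency $\Pi^2 = \Pi$ collapses this to $\tr(\Pi)$. Second, I would invoke the spectral theorem for Hermitian operators to diagonalize $\Pi$; idempotency forces every eigenvalue $\lambda$ of $\Pi$ to satisfy $\lambda^2 = \lambda$, so $\lambda \in \{0,1\}$ (a point already noted in the paper's definition of a projector). Therefore $\tr(\Pi)$ equals the number of eigenvalues equal to $1$, which is exactly $\rank(\Pi)$, yielding the chain
\[
\Vert \Pi \Vert_F^2 \;=\; \tr(\Pi^\dagger \Pi) \;=\; \tr(\Pi^2) \;=\; \tr(\Pi) \;=\; \rank(\Pi).
\]

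There is essentially no obstacle: once the definitions are unfolded, the claim is a one-line computation. The only point that merits care is the orthogonality assumption on $\Pi$; for a general (oblique) idempotent one would have $\Pi^\dagger \neq \Pi$ and the first step would fail, but in the quantum-information setting of this paper all projectors are orthogonal, so the identity follows immediately.
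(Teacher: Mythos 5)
Your proof is correct, and it is exactly the standard one-line computation $\Vert \Pi \Vert_F^2 = \tr(\Pi^\dagger\Pi) = \tr(\Pi^2) = \tr(\Pi) = \rank(\Pi)$; the paper states this as a Fact without proof, so there is nothing different to compare against. Your caveat is also well placed: the paper's definition of a projector only requires $\Pi^2 = \Pi$, and for an oblique idempotent the identity fails (e.g.\ a $2\times 2$ idempotent of rank $1$ can have $\Vert\Pi\Vert_F^2 = 2$), but the projectors actually used in the paper, namely $\mathsf{Proj}(\supp(\cdot))$ and the measurement projectors obtained via Naimark dilation, are orthogonal (Hermitian), so your argument applies as written.
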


\suppress{
\begin{fact}[Chebyshev inequality~\cite{Chebyshev}]\label{fact7}
	\vspace{0.3cm}
	Let $X_1, X_2, \ldots , X_n$ be $n$ independent random variables with $\E(X_i) = \mu_i$ and $\var(X_i) = \sigma^2_i$.
	Then, for any $a > 0$: 
	$$ \Pr \left (\L\vert \sum_{i=1}^{n}X_i -\sum_{i=1}^{n}\mu_i  \R\vert \geq a \right) < \frac{ \sum_{i=1}^{n} \sigma^2_i }{a^2}.$$In particular, for identically distributed random variables with expectation $\mu$ and variance $ \sigma^2$, for any $a >0$, we obtain
	$$ \Pr \left(  \L\vert \frac{\sum_{i=1}^{n}X_i }{n}-\mu  \R\vert \geq a \right) < \frac{\sigma^2}{na^2}.$$
\end{fact}
\vspace{0.3cm}
\begin{fact}[Chernoff bound~\cite{Chernoff2011}]\label{fact8}
	\vspace{0.3cm}
	Let $X_1, X_2, \ldots , X_n$ be $n$ independent random variables such that $\forall i \in [n], a \leq X_i \leq b$ and $\E(X_i) = \mu_i$. Let $X =   \frac{\sum_{i=1}^{n}X_i }{n}$ and  $\E(X) =  \frac{\sum_{i=1}^{n}\mu_i }{n} = \mu.$ 
	Then, for any $\delta > 0$,
	$$ \Pr \left ( X \geq  (1+\delta) \mu \right)  < e^{ \frac{-2n \delta^2 \mu^2}{ (b-a)^2}}, $$
	$$ \Pr \left ( X \leq  (1 - \delta) \mu \right) < e^{ \frac{-n \delta^2 \mu^2}{ (b-a)^2}}.$$In particular, for identically and independent distributed random variables with expectation $\mu$, for any $\delta >0$, we obtain
	
	$$ \Pr \left ( X - \mu \geq \delta \right) < e^{ \frac{-2n \delta^2}{ (b-a)^2}}, $$
	$$ \Pr \left ( X   - \mu  \leq  -\delta \right) < e^{ \frac{-n \delta^2 }{ (b-a)^2}}.$$
\end{fact}

}


		
		

	



The following fact follows from Theorem $4$ in~\cite{HK19}.
\begin{fact}[Classical shadow~\cite{HK19}]\label{hk_trace}Fix $\epsilon, \delta \in (0,1)$ and $a>0$. Let $\rho \in \mathcal{D}(\cH_R)$ be a quantum state on $n$ qubits and $\vert R \vert= 2^n$. Let $T = O\L( a \frac{\log (\frac{ 1}{\delta})}{\eps^2}\R)$. Let $\M^{STAB}$ be the random stabilizer measurement, i.e. do a random Clifford unitary then measure in the computational basis. Let 
\begin{align*}
    &S \defeq  \M^{STAB}(\rho) \otimes \M^{STAB}(\rho) \\
    &  \quad \quad \quad  \otimes \ldots \otimes \M^{STAB}(\rho) \ \text{(T times)},
\end{align*}
where $\M^{STAB}(\rho)$ is a classical representation of the post-measurement stabilizer state. There exists a deterministic procedure $d(.)$ such that for any Hermitian matrix $A \in  \mathcal{L}(\cH_R)$ with $\Vert A \Vert^2_F \leq a$
$$\Pr_{s \leftarrow S} ( \vert d(A,s) - \Tr(A \rho) \vert \leq \eps )    \geq 1- \delta.$$
Additionally, $ \log ( \supp(S) ) \leq O(Tn^2)$. We call $S$ a classical shadow of $\rho$. 
\end{fact}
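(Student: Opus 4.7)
My plan is to follow the classical shadow framework of Huang, Kueng, and Preskill. The starting point is a single-shot unbiased estimator of $\rho$. Let $U$ be a uniformly random Clifford on $n$ qubits and let $b \in \{0,1\}^n$ be the outcome of measuring $U\rho U^\dagger$ in the computational basis. Using the fact that the Clifford group forms a $3$-design, the measurement twirl channel $\cM(X) \defeq \bbE_{U}\sum_b \bra{b} U X U^\dagger \ket{b}\, U^\dagger \ketbra{b} U$ simplifies to $\cM(X)=(X+\Tr(X)\id)/(2^n+1)$, which is invertible on the space of Hermitian matrices. Setting $\hat\rho \defeq \cM^{-1}(U^\dagger \ketbra{b} U) = (2^n+1)\, U^\dagger \ketbra{b} U - \id$ yields $\bbE[\hat\rho]=\rho$, so $\hat\alpha \defeq \Tr(A\hat\rho)$ is an unbiased estimator of $\Tr(A\rho)$ for every Hermitian $A$.

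Next I would bound the variance of $\hat\alpha$. A second application of the Clifford $3$-design property (the core calculation of HK19) shows that the single-shot ``shadow norm'' is controlled by the Frobenius norm, giving $\var(\hat\alpha) \le c\,\Vert A \Vert_F^2 \le c a$ for some absolute constant $c$. I would then invoke the standard median-of-means trick: draw $T$ independent shadows, split them into $K = \Theta(\log(1/\delta))$ blocks of size $T/K = \Theta(a/\eps^2)$, compute the empirical mean $\hat\alpha_k$ of each block, and let $d(A,s)$ be the median of the $\hat\alpha_k$. Chebyshev's inequality applied to each block (whose variance is $O(\eps^2)$) guarantees $|\hat\alpha_k - \Tr(A\rho)| \le \eps$ with probability at least $2/3$, and a Chernoff bound on the number of ``good'' blocks gives $|d(A,s)-\Tr(A\rho)|\le \eps$ with probability at least $1-\delta$ provided $T = O(a \log(1/\delta)/\eps^2)$, as claimed.

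For the storage bound, each individual shadow is fully specified by a pair $(U,b)$. The Clifford group on $n$ qubits has size $2^{O(n^2)}$, so $U$ requires $O(n^2)$ bits to describe, while $b$ requires only $n$. Hence a single sample fits in $O(n^2)$ bits and the full shadow $S$ takes $O(Tn^2)$ bits, giving $\log|\supp(S)| \le O(Tn^2)$. The main technical obstacle is the variance computation, which requires Weingarten-style calculations on a Clifford $3$-design; since this is precisely the content of Theorem~$4$ in HK19, I would quote it as a black box rather than rederive it. The remaining steps are routine concentration and bookkeeping.
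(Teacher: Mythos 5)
Your proposal is correct and follows essentially the same route as the paper's proof sketch: the same single-shot estimator $\Tr\bigl(A\bigl((2^n+1)U^\dagger\ketbra{b}U-\id\bigr)\bigr)$, the same variance bound $O(\Vert A\Vert_F^2)$ quoted from the HK19 $3$-design calculation, median-of-means with Chebyshev plus Chernoff, and the $2^{O(n^2)}$ count of stabilizer states for the $O(Tn^2)$ storage bound. No gaps to report.
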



\begin{proof}
We provide a sketch of the proof for completion. Since there are $2^{O({n^2})}$ stabilizer states in $n$ qubits, $\M^{STAB}(\rho)$ has an efficient classical representation of $O(n^2)$ bits, and thus $S$ can be represented in $O(Tn^2)$ bits. This proves that $\log (\supp(S)) \leq O(Tn^2)$. Let $\ket{x}$ be the (random) stabilizer state corresponding to $\M^{STAB}(\rho)$. For $x \leftarrow \M^{STAB}(\rho)$, we define $d'(A,x) \defeq \Tr(A ((2^n+1)\ket{x}\bra{x} - \id))$. It can be shown that for any fixed Hermitian operator $A$,
\[ \E(d'(A,X))= \E_{x \leftarrow X}(d'(A,x)) = \Tr(A \rho) \quad ; \quad \]
\[\var(d'(A,X)) = O(\Vert A \Vert^2_F). \]
Finally, the deterministic procedure $d(\cdot)$ uses $T$ values $d'(A,x_1), d'(A,x_2), \ldots, d'(A,x_T)$, where each $x_i \leftarrow \M^{STAB}(\rho)$, to estimate $\Tr(A\rho)$ using the standard median-of-means approach. Using Chebyshev's inequality and the Chernoff bound, we obtain
$ \Pr_{s \leftarrow S} (\vert d(A,s) - \Tr(A \rho) \vert \leq \eps ) \geq 1 - \delta.$ \qedhere
\end{proof}

\section{Product distribution proof}\label{sec3}
\suppress{We upper bound  distributional classical one-way communication complexity over \emph{product distribution} by distributional entanglement-assisted quantum one-way communication complexity, using ideas of  ~\cite{konig2008bounded} and ~\cite{harsha2007communication}. Given a quantum message with good winning probability for distributional entanglement-assisted quantum one-way communication, we use the idea of ~\cite{konig2008bounded} to show that there exists a classical message that achieves similar winning probability. The classical message might be long but has low mutual information with input $X$, so we use the idea of~\cite{harsha2007communication} to compress it into a short message.}

Here we restate Theorem~\ref{intro:thm1} and provide its proof.
\begin{thm}
 Let $f: \X \times \Y \rightarrow  \Z \cup \{\bot\}$ be a partial function and $\mu$ be a product distribution supported on $f^{-1}(\Z)$. Denote $d=|\Z|$. Let $\eta >0$ and $0 \leq \eps \leq 1-1/d$. Then,
\begin{align*}
&\sfR^{1,\mu}_{2\eps-d\eps^2/(d-1)+\eta}(f) \\ & \quad \quad \leq  2\sfQ^{1,\mu,*}_{\eps}(f) +  O\L(\log\log (1/\eta)\R), \\
&\sfR^{1,\mu}_{2\eps-d\eps^2/(d-1)+\eta}(f) \\&\quad \quad \leq\sfQ^{1,\mu}_{\eps}(f) +  O\L(\log\log (1/\eta)\R),\\
&\sfR^{1,\mu_X}_{2\eps-d\eps^2/(d-1)+\eta}(f)\\ &\quad \quad \leq2\sfQ^{1,\mu_X,*}_{\eps}(f)  +  O\L(\log\log (1/\eta)\R),\\
&\sfR^{1,\mu_X}_{2\eps-d\eps^2/(d-1)+\eta}(f) \\&\quad \quad \leq \sfQ^{1,\mu_X}_{\eps}(f)  +  O\L(\log\log (1/\eta)\R).
\end{align*}
\end{thm}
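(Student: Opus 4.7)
Let $\cP$ be an optimal quantum one-way protocol, with Alice's message register $D$ and (in the entanglement-assisted variant) Bob's share $E_B$ of a fixed pre-shared pure state; set $Q := DE_B$ (resp.\ $Q := D$ without entanglement). On input $x$, Bob's register $Q$ holds a fixed state $\rho^x_Q$, so the cq-state $\rho_{XQ} = \sum_x \mu_X(x)\ketbra{x}\otimes\rho^x_Q$ is independent of $y$; this is the crucial place where product $\mu$ (resp.\ the $\mu_X$-distributional model) enters. For each $y$, set $Z = f(X,y)$ and $B^y_z := \sum_{x:f(x,y)=z}\mu_X(x)\rho^x_Q$; then $A := \sum_z B^y_z = \sum_x\mu_X(x)\rho^x_Q$ is the same for every $y$. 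Hence the PGM $\M^{y,\text{pgm}}_Z$ for guessing $Z$ from $Q$ has POVM elements $E^{y,\text{pgm}}_z = A^{-1/2} B^y_z A^{-1/2} = \sum_{x:f(x,y)=z} E^{\text{pgm}}_x$, where $E^{\text{pgm}}_x := A^{-1/2}\mu_X(x)\rho^x_Q A^{-1/2}$ are the elements of the PGM $\M^{\text{pgm}}_X$ for guessing $X$ from $Q$. Consequently $\M^{y,\text{pgm}}_Z$ is implemented by running the $y$-independent measurement $\M^{\text{pgm}}_X$ and returning $f(\hat X, y)$.

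\textbf{Error accounting and classical simulation.} Let $\eps_y$ be the per-$y$ error of $\cP$, so $\bbE_{y\sim\mu_Y}\eps_y\leq\eps$ (in the $\mu$-case) or $\max_y\eps_y\leq\eps$ (in the $\mu_X$-case). By Fact~\ref{thm:pgm-opt}, $\M^{y,\text{pgm}}_Z$ has error at most $h(\eps_y) := 1 - g(1-\eps_y) = 2\eps_y - d\eps_y^2/(d-1)$. Since $h$ is concave and non-decreasing on $[0,1-1/d]$, Jensen together with monotonicity handles the $\mu$-case, and monotonicity alone handles the $\mu_X$-case, both giving overall PGM error $\leq h(\eps)$. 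Alice, using her input $x$ and her knowledge of the fixed entanglement, simulates $Q$ locally (prepare a fresh copy of the pre-shared state, apply her encoding, trace out her ancilla), applies $\M^{\text{pgm}}_X$ to obtain $C \in \X$, and sends $C$; Bob outputs $f(C,y)$. By the PGM splitting this classical protocol has the same error $\leq 2\eps - d\eps^2/(d-1)$, but the message $C$ may be very long.

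\textbf{Compression, and the main obstacle.} Because $C$ is obtained by a CPTP map on $Q$, Fact~\ref{fact:mono} gives $\imax(X{:}C)\leq\imax(X{:}Q)$. In the entanglement-assisted case $\rho_{XE_B}=\rho_X\otimes\rho_{E_B}$, so Fact~\ref{fact:boundnew} yields $\imax(X{:}DE_B)\leq 2\log|D|$; in the unassisted case Fact~\ref{rhoablessthanrhoaidentity} gives $\rho_{XD}\leq\rho_X\otimes \id_D$, so $\imax(X{:}D)\leq\log|D|$. Applying the message-compression lemma (Fact~\ref{fact:compress}) with slack $\eta$ compresses $C$ to a classical message of length $\imax(X{:}C)+O(\log\log(1/\eta))$ together with a reconstruction $C'$ satisfying $XC\approx_\eta XC'$; Bob outputs $f(C',y)$, which increases the error by at most $\eta$ via the $\ell_1$/TV bound. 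Combining the four cases yields all four inequalities. The main technical obstacle is the PGM splitting: the factorization $E^{y,\text{pgm}}_z = \sum_{x:f(x,y)=z}E^{\text{pgm}}_x$ rests on $A=\sum_x\mu_X(x)\rho^x_Q$ being independent of $y$, which requires that the $X$-marginal seen by the protocol is $\mu_X$ for every fixed $y$ — precisely the product/$\mu_X$-distributional hypothesis. Without this invariance, one cannot factor the $Z$-guessing measurement through a $y$-independent $X$-guessing measurement, which is why a different idea (the classical shadow argument) is needed for the non-product case.
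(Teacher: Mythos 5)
Your proposal is correct and follows essentially the same route as the paper's proof: replace Bob's measurement by the PGM, use the $y$-independence of $A=\sum_x\mu_X(x)\rho^x_Q$ (the product/$\mu_X$ hypothesis) to split $E^{pgm,y}_z=\sum_{x\in S^y_z}E^{pgm}_x$ so Alice can measure $\M^{pgm}_X$ herself and send the outcome, then compress via $\imax(X{:}C)\leq\imax(X{:}Q)\leq 2\log|D|$ (or $\log|D|$ without entanglement) using the message-compression fact. Your only deviation is cosmetic — you do the error accounting with the concave function $h(\eps)=1-g(1-\eps)$ and Jensen per-$y$, whereas the paper applies the convex $g$ to the averaged success probability — and both yield the same bound $2\eps-d\eps^2/(d-1)+\eta$.
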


\begin{proof}
The proofs of all the inequalities are all very similar. We give a detailed proof of the first and state the differences to obtain the other inequalities at the end. Recall we use the notation, $\psi$ to represent the state and also the density matrix $\ketbra{\psi}$, associated with $\ket{\psi}$. 

Let $S^y_z= \{x|f(x,y)=z\}$ and $Q^{1, \mu, *}_{\eps}(f) = a$. Consider an optimal distributional entanglement-assisted quantum communication strategy. Let the initial state be 
$$ \rho'_{XYAB}=\sum_{x,y} \mu(x,y) \ketbra{xy} \otimes \ketbra{\rho'_{AB}},$$
where $\ket{\rho'_{AB}}$ is the shared entanglement between Alice and Bob (Alice, Bob hold registers $A$, $B$ respectively). Alice applies a unitary $U:\cH_{XA} \rightarrow \cH_{XA'D}$ such that $U = \sum_{x} \ketbra{x} \otimes U^x$ (where $U^x : \cH_{A} \rightarrow \cH_{A'D}$ is a unitary conditioned on $X=x$) and sends across register $D$ to Bob. Let the state at this point be 
$$\rho_{XYA'Q}=\sum_{x,y} \mu(x,y) \ketbra{xy} \otimes \rho^x_{A'Q},$$
where $Q\equiv DB $. Since, $\rho_{XB} =\rho'_{XB}=\rho'_{X} \otimes \rho'_{B}=\rho_{X} \otimes \rho_{B}$,  from Fact~\ref{fact:boundnew} we have,
 \begin{align} \label{eq:X:Q}
    \mI(X:Q)_\rho  \leq  2\log (|D|) = 2a.
 \end{align}Bob performs measurement $\M^y$ with POVM elements $\{E^y_z : \forall z \in \Z \}$ on register $Q$ conditioned on $Y=y$ to output $f(x,y)$. Then,
	$$ \sum_{x,y} \mu(x,y) \trp{\rho^x_{Q} E^y_{f(x,y)}} \geq 1- \eps .$$This implies, 
\begin{align} \label{eq:prob_conv}
  1-\eps &\leq   \sum_{x,y} \mu(x,y) \trp{\rho^x_Q E^y_{f(x,y)}} \nn \\
    &=  \sum_y \mu(y)  \trp{\sum_{x} \mu(x) \rho^x_Q E^y_{f(x,y)}} \nn \\
    &=\sum_y \mu(y)\sum_{z\in\Z} \trp{ \sum_{x\in S^y_{z}} \mu(x) \rho^x_Q  E^y_{z}} \nn \\ 
    &=\sum_y \mu(y)\sum_{z\in\Z}\mu^y(z) \trp{ \rho_Q^{y,z} E^y_{z}}, 
\end{align}
where we defined $\mu^y(z)\defeq \sum_{x\in S^y_{z}} \mu(x)$ and $\rho_Q^{y,z}\defeq \frac{1}{\mu^y(z)}\sum_{x\in S^y_{z}} \mu(x) \rho^x_Q$. Note that $\rho_Q^{y,z}$ are density matrices and $\sum_{z\in\Z}\mu^y(z)=\sum_{z\in\Z}\sum_{x\in S^y_{z}}\mu(x)=\sum_{x\in\X}\mu(x)=1$. We can view $\sum_{z\in\Z}\mu^y(z) \trp{ \rho_Q^{y,z} E^y_{z}}$ as the success probability of distinguishing the cq-state $\rho^y_{ZQ}=\sum_{z\in \Z}\mu^y(z)  \proj{z} \otimes \rho^{y,z}_Q$ with measurement $\M^y$ with POVM elements $\{E^y_z: \forall z \in \Z \}$. We have,\
\begin{align}
    1-\eps \leq \sum_y \mu(y) \Pr[Z^y=\M^y(Q)], \nn
\end{align}
where we defined the random variable $Z^y\defeq f(X,y)$. Applying the function $g$ of Fact~\ref{thm:pgm-opt} to both sides and using the convexity of $g$, we have
\begin{align} 
   g(1-\eps) &\leq  g\L(\sum_y \mu(y) \Pr[Z^y=\M^y(Q)]\R) 
   \nn \\ 
   &\leq \sum_y  \mu(y) g\L(\Pr[Z^y=\M^y(Q)]\R). \label{eq:post-caushy}
\end{align}

We now fix $y$ and replace the optimal measurement Bob does by the PGM $\M^{pgm,y}_Z$. $\M^{pgm,y}_Z$ consists of POVM elements $E^{pgm,y}_z=A^{-1/2} A^y_z A^{-1/2}$ for all $z\in\Z$, where $A^y_z= \mu^y(z)\rho^{y,z}_Q$, and $A=\sum_{z\in\Z}A^y_z$. Note that $A^y_z=\sum_{x\in S^y_z} \mu(x) \rho^x_Q$, and $A=\sum_x \mu(x) \rho^x_Q$ is independent of $y$. From Fact~\ref{thm:pgm-opt}, the optimality of PGM, we have that for all $y$,  
$$g\L(\Pr[Z^y=\M^y(Q)]\R) \leq \Pr[Z^y=\M^{pgm,y}_Z(Q)].$$ 
Using Equation~\eqref{eq:post-caushy},
\begin{align}\label{eq:before_split}
    g(1-\eps) \leq &\sum_y \mu(y)  \Pr[Z^y=\M^{pgm,y}_Z(Q)] \nn \\
    =& \sum_y \mu(y) \trp{\sum_x \mu(x) \rho^x_Q E^{pgm,y}_{f(x,y)}},
\end{align}
where the last line follows logic similar to Equation~\eqref{eq:prob_conv}.
Notice that   
\begin{align}
    E^{pgm,y}_z &= A^{-1/2} A^y_z A^{-1/2} \nn \\&=A^{-1/2}\left(\sum_{x\in S^y_z} \mu(x) \rho^x_Q\right) A^{-1/2} \nn \\
    &= \sum_{x\in S^y_z} A^{-1/2} A_x A^{-1/2} \nn \\&= \sum_{x\in S^y_z} E^{pgm}_x=\sum_{x'} \delta_{z,f(x',y)}E^{pgm}_{x'}, \nn
\end{align}
where $\{E^{pgm}_x: \forall x \in \X \}$ are the POVM elements of the PGM that guesses $X$ from $Q$. 
 Therefore, instead of doing $ \M^{pgm,y}_Z$, we can measure with $\M^{pgm}_X$, getting a guess $x'$, and then compute $f(x',y)$ as our guess of $Z^y=f(X,y)$. That is, 
 \begin{align} \label{eq:split}
&\trp{\sum_x \mu(x) \rho^x_Q E^{pgm,y}_{f(x,y)}} \nn \\&     = \trp{\sum_x \mu(x) \rho^x_Q \sum_{c\in S^y_{f(x,y)}}E^{pgm}_{c}}.
\end{align}



More precisely, define $C \equiv \M^{pgm}_{X}(Q)$. Since $C$ is a classical random variable that is independent of $y$, Alice can compute $C$ by herself. Consider the intermediate classical one-way communication protocol where Alice computes and sends $C = \M^{pgm}_{X}(Q)$ to Bob, and Bob predicts $f(x,y)$ with $z=f(c,y)$. The success probability of this intermediate protocol is

\begin{align}
     &\sum_y \mu(y) \sum_x \mu(x) \sum_{c \in S^y_z} \Pr[C=c|X=x]  \nn \\
    =& \sum_y \mu(y) \trp{\sum_x \mu(x) \rho^x_Q \sum_{c\in S^y_{f(x,y)}}E^{pgm}_{c}} \nn \\
    &\geq g(1-\eps), 
\end{align}
where we used Equation~\eqref{eq:split} and Equation~\eqref{eq:before_split} in the last line.

In this intermediate protocol, the message $C$ that Alice sent is not short. In fact, it has the same length as $X$. However, $C$ has low max-information with $X$. By Equation~\eqref{eq:X:Q} and Fact~\ref{fact:mono} we have  
\begin{align}
    &\mI(X:C)=\mI(X : \M^{pgm}_{X}(Q)) \nn \\& \leq \mI(X:Q)_\rho \leq  2a. \nn
\end{align}
Therefore using Fact~\ref{fact:compress}, we can compress $C$ and get a classical one-way communication protocol with message $M$ and public-coin $R$, such that 
\begin{align}
   & \ell(M) \leq  2a +  O(\log\log (1/\eta)),  
\end{align}
and success probability $g(1-\eps)-\eta$. Therefore $\sfR^{1,\mu}_{2\eps-d\eps^2/(d-1)+\eta}(f) \leq 2a + O(\log\log(1/\eta))$ which gives the desired (bound). 

To obtain the second inequality we note that Alice and Bob did not have the starting state ${\rho'_{AB}}$,~so the register $B$ is empty, and in Equation~\eqref{eq:X:Q} we have instead 
 \begin{align}
     \mI(X:Q)_\rho= \mI(X:D)_\rho  \leq  \log (|D|) =   a.
 \end{align}
The inequality is because of Fact~\ref{fact:mono}. Everything else follows. 
 
To obtain the third inequality we basically condition on every possible $y$, i.e. changing "$\sum_y \mu(y)$" in the proof to "for all $y$" and  "$\sum_{x,y} \mu(x,y)$" to "for all $y$ , $\sum_x \mu(x)$". We also do not need to use the convexity of Fact~\ref{thm:pgm-opt}.

To obtain the fourth inequality, we combine the changes to obtain the second and the third inequalities.~\qedhere 
\end{proof}


\section{Non-product distribution proof}\label{sec4}

\suppress{We upper bound  distributional classical one-way communication complexity by  distributional quantum one-way communication complexity, using ideas of~\cite{HK19}. Given a quantum message with good winning probability for quantum distributional one-way communication, we use the idea of~\cite{HK19} to show that there exists a classical message that achieves similar winning probability for  distributional classical one-way  communication.}
Here we restate Theorem~\ref{intro:thm2} and provide a proof.
\begin{thm}
Let $\eps, \eta >0$ such that $\eps/\eta + \eta < 0.5$. Let $f: \X \times \Y \rightarrow \{0,1,\bot \}$ be a partial function and $\mu$ be a distribution supported on $f^{-1}(0) \cup f^{-1}(1)$. Then,
	$\sfD^{1,\mu}_{3\eta}(f)=O\left(  \frac{\sfcs(f) }{\eta^3}  \sfQ^{1,\mu}_{{\eps}} (f)  \right)$, where
      	\[\sfcs(f) = \max_{y} \min_{z\in\{0,1\}} \{ \vert \{x~|~f(x,y)=z\} \vert\}  \enspace.\] 
\end{thm}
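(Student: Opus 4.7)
The plan is to follow the outline sketched in the introduction: turn an optimal quantum one-way protocol into a classical one by replacing the quantum message with a Huang--Kueng--Preskill classical shadow, and then compress the resulting classical message via Fact~\ref{fact:compress}. Fix an optimal protocol $\cP$ with distributional error $\eps$ and quantum message register $Q$ of $a = \sfQ^{1,\mu}_\eps(f)$ qubits. By replacing Alice's message with its canonical purification (at the price of at most doubling $a$), I assume $\rho^x_Q = \ketbra{\rho^x}$ is pure, and let $\M^y=\{E^y_0,E^y_1\}$ denote Bob's POVM on input $y$.

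The first and most delicate step is to replace $E^y_{b}$ by a low-Frobenius-norm proxy $\tilde{E}^y$. For each $y$, pick $b_y \in \{0,1\}$ with $|S^y_{b_y}| \leq \sfcs(f)$ and let $\Pi^y$ be the orthogonal projector onto $\Span\{\ket{\rho^{x'}} : x' \in S^y_{b_y}\}$. I would define $\tilde{E}^y \defeq \Pi^y E^y_{b_y} \Pi^y$. Since $0 \leq \tilde{E}^y \leq \Pi^y$, its eigenvalues lie in $[0,1]$ and vanish outside the range of $\Pi^y$, so by Fact~\ref{factprojfrobrank},
\[
\|\tilde{E}^y\|_F^2 \;\leq\; \Tr(\tilde{E}^y) \;\leq\; \Tr(\Pi^y) \;=\; \rank(\Pi^y) \;\leq\; \sfcs(f).
\]
For $x \in S^y_{b_y}$, $\ket{\rho^x}$ lies in the range of $\Pi^y$, so $\Tr(\tilde{E}^y \rho^x_Q) = \Tr(E^y_{b_y} \rho^x_Q)$, which is close to $1$ whenever $\cP$ is correct on $(x,y)$. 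For $x \in S^y_{1-b_y}$ I would bound $\Tr(\tilde{E}^y \rho^x_Q) \leq \Tr(\Pi^y \rho^x_Q)$ and use a gentle-measurement-style argument to show this is small whenever $\cP$ succeeds on $(x,y)$ and on the spanning points in $S^y_{b_y}$. Markov's inequality (Fact~\ref{fact:markov}) bounds the $\mu$-mass of inputs on which $\cP$'s per-input error exceeds $\eta$ by $\eps/\eta$, and this is precisely where the hypothesis $\eps/\eta+\eta<0.5$ enters, ensuring that thresholding the estimate of $\Tr(\tilde{E}^y\rho^x_Q)$ at $1/2$ recovers $f(x,y)$ on a set of $\mu$-mass at least $1-O(\eta)$.

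Next, Alice invokes the classical shadow of Fact~\ref{hk_trace}. Since she knows $x$, she internally prepares $T$ independent copies of $\rho^x_Q$ and applies random stabilizer measurements to obtain a classical shadow $S$. Taking shadow precision and failure probability both $\Theta(\eta)$, and using $\|\tilde{E}^y\|_F^2 \leq \sfcs(f)$, Fact~\ref{hk_trace} yields $T = O\!\left(\sfcs(f)\log(1/\eta)/\eta^2\right) = O(\sfcs(f)/\eta^3)$. Bob, on receiving $S$ and knowing $y$, computes the deterministic estimate $d(\tilde{E}^y, S)$ of $\Tr(\tilde{E}^y \rho^x_Q)$ and outputs $b_y$ if the estimate exceeds $1/2$, otherwise $1-b_y$.

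Finally I would bound the communication. Because $S$ is obtained by a CPTP map from $(\rho^x_Q)^{\otimes T}$, monotonicity (Fact~\ref{fact:mono}) combined with Fact~\ref{rhoablessthanrhoaidentity} gives
\[
\imax(X:S) \;\leq\; \imax(X:Q^{\otimes T}) \;\leq\; T\log|Q| \;=\; Ta.
\]
Crucially this step uses no product structure on $\mu$, which is what permits a non-product statement. Applying the message-compression lemma (Fact~\ref{fact:compress}) with parameter $\eta$ produces a public-coin classical one-way protocol whose message has length $Ta + O(\log\log(1/\eta)) = O(\sfcs(f)\cdot a/\eta^3)$, as required, and from which Bob can sample a proxy $S'$ for $S$ that is $\eta$-close in total variation. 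The three $O(\eta)$ error contributions---from the $\eps/\eta$ Markov tail of $\cP$, the shadow estimator's failure, and the compression---sum to at most $3\eta$ under the hypothesis, yielding the claimed bound. The main obstacle I expect is the second step: the construction of $\tilde{E}^y$ must be analyzed carefully enough that on $x \in S^y_{1-b_y}$ the trace $\Tr(\tilde{E}^y \rho^x_Q)$ is provably small on $\mu$-average, even though $\Pi^y$ is built solely from states labeled by $S^y_{b_y}$.
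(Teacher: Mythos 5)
Your overall architecture (Markov over per-input error, a low-Frobenius-norm surrogate for Bob's POVM element, the stabilizer classical shadow with $T=O(\sfcs(f)\log(1/\eta)/\eta^2)$, the bound $\imax(X:S)\le Ta$, and compression via Fact~\ref{fact:compress}) is the same as the paper's. The genuine gap is exactly the step you flagged: your surrogate $\tilde E^y \defeq \Pi^y E^y_{b_y}\Pi^y$, with $\Pi^y$ the projector onto $\Span\{\ket{\rho^{x'}}:x'\in S^y_{b_y}\}$, does not admit the analysis you propose. The inequality $\Tr(\tilde E^y\rho^x_Q)\le\Tr(\Pi^y\rho^x_Q)$ is true but useless: the span of states that are each individually almost inside the range of $E^y_{b_y}$ can contain states lying entirely outside that range, so $\Tr(\Pi^y\rho^x_Q)$ need not be small for $x\in S^y_{1-b_y}$ even when the protocol is perfect. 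Concretely, with $E^y_{b_y}=\ketbra{e}$, let two inputs in $S^y_{b_y}$ have messages $\sqrt{1-\delta}\ket{e}\pm\sqrt{\delta}\ket{g}$ with $g\perp e$, and let some $x\in S^y_{1-b_y}$ have message $\ket{g}$; the protocol errs with probability $\delta$ on the first two and $0$ on $x$, yet $\Pi^y\geq\ketbra{g}$ so $\Tr(\Pi^y\rho^x_Q)=1$, and no gentle-measurement argument can rescue this. Moreover, even the best general bound for your operator, $\|E^y_{b_y}\Pi^y(\id-E^y_{b_y})\|\le\tfrac12$, only gives $\Tr(\tilde E^y\rho^x_Q)\le\bigl(\tfrac12+\sqrt{\eps/\eta}\bigr)^2$ on the Markov-good pairs, which is not below the $\tfrac12$ threshold under the stated hypothesis $\eps/\eta+\eta<0.5$; so the construction itself, not just its analysis, is too weak as stated.

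The paper fixes this by spanning the \emph{projected} vectors rather than the message states: after using Naimark's theorem to make $\{E^y_0,E^y_1\}$ projectors, it sets $\ket{\tilde\psi^x_Q}\defeq E^y_{b_y}\ket{\psi^x_Q}$ for $x\in S^y_{b_y}$ and takes $\tilde E^y_{b_y}$ to be the projector onto $\supp\bigl(\sum_{x\in S^y_{b_y}}\ketbra{\tilde\psi^x_Q}\bigr)$. This yields $\tilde E^y_{b_y}\le E^y_{b_y}$, hence $\Tr(\tilde E^y_{b_y}\rho^x_Q)=\Tr(E^y_{b_y}\rho^x_Q)$ for $x\in S^y_{b_y}$ and $\Tr(\tilde E^y_{b_y}\rho^x_Q)\le\Tr(E^y_{b_y}\rho^x_Q)\le\eps/\eta$ for Markov-good $x\in S^y_{1-b_y}$ --- precisely the two one-sided bounds that make the $\tfrac12$ threshold work under $\eps/\eta+\eta<0.5$, while the rank (hence Frobenius norm squared) is still at most $|S^y_{b_y}|\le\sfcs(f)$. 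If you replace your definition of $\tilde E^y$ by this one, the rest of your argument goes through essentially as in the paper.
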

\begin{proof}
Let $S^y_0= \{x|f(x,y)=0\}$, $S^y_1= \{x|f(x,y)=1\}$ and $\sfQ^{1, \mu}_{\eps}(f) = a$. Consider an optimal quantum protocol $\cP$ where Alice prepares the quantum message $Q$ according to the cq-state $\psi_{XQ}=\sum_x \mu(x) \proj{x}\otimes \psi^x_Q$, and Bob performs measurement $\M^y$ with  POVM elements $\{E^y_0,E^y_1\}$ to output $f(x,y)$. We can assume, Alice sends $\psi^x_Q$ along with its canonical purification since it only increases the quantum communication by a multiplicative factor of $2$. Also, we can assume that for every $y$, POVM elements $\{E^y_0,E^y_1\}$ are projectors, since Alice can send ancilla and Bob can realize POVM operators as projectors (Fact~\ref{fact:naimark}). From here on, we assume $\psi^x_Q$ is a pure state $\ket{\psi^x_Q}$ and $E^y=\{E^y_0,E^y_1\}$ are projectors for every $x,y$ respectively. Since $\sfQ^{1,\mu}_{\eps}(f) = a$, we have $\log|Q| \leq a$ and
\begin{align*}
    \sum_{x,y} \mu(x,y) \trp{\ketbra{\psi^x_Q} E^y_{f(x,y)}} \geq 1- \eps .
\end{align*}


  For all $y$, define,
  \begin{align*}
      & \forall x \in S^y_0, \quad  \ket{ \tilde{\psi}^x_{Q}}  \defeq {E^y_0 \ket{\psi^x_{Q}}} \quad \\
      & \text{and}  \quad \tilde{E}^y_0 \defeq  \mathsf{Proj}\left(\supp\left(\sum_{x \in S^y_0 } \ketbra{\tilde{\psi}^x_{Q}}\right)\right)   ,
  \end{align*}
\begin{align*}
      & \forall x \in S^y_1, \quad  \ket{\tilde{\psi}^x_{Q}}  \defeq {E^y_1 \ket{\psi^x_{Q}}} \quad  \\
      & \text{and}  \quad \tilde{E}^y_1 \defeq \mathsf{Proj}\left(\supp\left(\sum_{x \in S^y_1 }\ketbra{\tilde{\psi}^x_{Q}}\right) \right), 
  \end{align*}
where $ \ket{\tilde{\psi}^x_{Q}} $ are unnormalized vectors with length less than $1$. Note that $\tilde{E}^y_0 \le E^y_0, \tilde{E}^y_1 \le E^y_1, $ $\trp{\ketbra{\psi^x_{Q}} E^y_{0}} =\trp{\ketbra{\psi^x_{Q}} \tilde{E}^y_{0}}$ for every $x \in S^y_0$ and $\trp{\ketbra{\psi^x_{Q}} E^y_{1}} =\trp{\ketbra{\psi^x_{Q}} \tilde{E}^y_{1}}$ for every $x \in S^y_1$. Also, $\Vert \tilde{E}^y_{0} \Vert_F^2 \leq \vert S^y_0 \vert $ and $\Vert \tilde{E}^y_{1} \Vert_F^2 \leq \vert S^y_1 \vert$ from Fact~\ref{factprojfrobrank}. Let 
\begin{align*}
    K  = \max_{y} \min \{ \Vert \tilde{E}^{y}_0 \Vert^2_F, \Vert \tilde{E}^{y}_1 \Vert^2_F\}.
\end{align*}
Let $b_y =i $ be such that $ \vert S^y_i \vert \leq \vert S^y_{1-i} \vert$. For $x$ such that $f(x,y) = b_y$, we have \begin{align}\label{eq100}
   \trp{\ketbra{\psi^x_Q} \tilde{E}^y_{b_y}}  =\trp{\ketbra{\psi^x_Q} E^y_{b_y}} ,
\end{align}and if $f(x,y) = 1-b_y$, then  \begin{align}\label{eq101}
   \trp{\ketbra{\psi^x_Q} \tilde{E}^y_{b_y}}  \leq \trp{\ketbra{\psi^x_Q} E^y_{b_y}}.
\end{align} 
	 
The (intermediate) classical protocol $\cP_1$ is as follows. 	 
	\begin{enumerate}
	    \item Alice (on input $x$) prepares $T = O\left(\frac{K}{\eta^2} \log \left( \frac{1}{\eta} \right) \right)$ copies of  $\ket{\psi^x_Q}$, i.e. $\ket{\psi^x_Q}^{\otimes T}$ and measures them independently in stabilizer measurement ($\M^{STAB}$) to generate a classical random variable $S_x= \M^{STAB}(Q)^{\otimes T}$. 
	    \item Alice sends $S_x$ to Bob. Note that $\ell(S_x) = O(T \ell(Q)^2) =O(Ta^2)$. 
	    \item Bob (on input $y$) estimates  $\trp{\ketbra{\psi^x_Q} \tilde{E}^{y}_{b_y}}$ via a deterministic procedure $d(.)$ such that (from Fact~\ref{hk_trace}) 
	 \begin{align} \label{eq:stabsample}
	     &\Pr_{s \leftarrow S_x} ( \vert d(\tilde{E}^y_{b_y},s) - \trp{\ketbra{\psi^x_Q} \tilde{E}^y_{b_y}} \vert \leq \eta ) \nn \\  & \geq 1- \eta.
	 \end{align}
	 \item If Bob's estimated value turns out to be less than $0.5$, he outputs $1-b_y$, otherwise $b_y$.
	\end{enumerate} 
	 

Let $ \mathcal{I}(x,y,s) $ be the indicator function such that $\mathcal{I}(x,y,s)=1 $ if subsample $s$ results in Bob (with input $y$) estimating $\trp{\ketbra{\psi^x_Q} \tilde{E}^{y}_{b_y}}$ upto additive error $\eta$. 
For every $x,y$, define $\mathsf{good}_{xy} \defeq \{s\in\supp(S_x) ~|~  \mathcal{I}(x,y,s)=1 \}$ and $\mathsf{bad}_{xy} = \supp(S_x) \setminus \mathsf{good}_{xy}$. 
\suppress{From initial quantum protocol, we have, 
 \begin{align}
   \sum_{x,y} \mu(x,y) \left(  \sum_{s \in \mathsf{good}(x)} \Pr(S_x =s) \trp{\ketbra{\psi^x_Q} E^{y}_{f(x,y)}}  +  \sum_{s \in \text{bad}(x)} \Pr(S_x =s)  \trp{\ketbra{\psi^x_Q} E^{y}_{f(x,y)}}  \right)  \geq 1-\eps .
\end{align}	
 which implies
\begin{align}
   \sum_{x,y} \mu(x,y) \left(  \sum_{s \in \mathsf{good}(x)} \Pr(S_x =s)    \trp{\ketbra{\psi^x_Q} E^{y}_{f(x,y)}}  +  \sum_{s \in \text{bad}(x)} \Pr(S_x =s)   \right)  \geq 1-\eps .
\end{align}This further implies 
\begin{align}
   \sum_{x,y} \mu(x,y) \left(  \sum_{s \in \mathsf{good}(x)} \Pr(S_x=s)  \trp{\ketbra{\psi^x_Q} E^{y}_{f(x,y)}}    \right )  \geq 1-\eps -\delta .
\end{align}	
 Combining with Equations~\eqref{eq100} and~\eqref{eq101}, we have 
 \begin{align}
    &\sum_{x,y|f(x,y)=b_y}  \mu(x,y) \left(  \sum_{s \in \mathsf{good}(x)} \Pr(S_x=s)  \trp{\ketbra{\psi^x_Q} E^{y}_{f(x,y)}}   \right ) +  \nn  \\&\sum_{x,y|f(x,y)=1-b_y}  \mu(x,y) \left(  \sum_{s \in \mathsf{good}(x)} \Pr(S_x=s)  \L(1-\trp{\ketbra{\psi^x_Q} E^{y}_{1-f(x,y)}}  \R)  \right )  \geq 1-\eps -\delta \nn \\
    \Rightarrow &\sum_{x,y|f(x,y)=b_y}  \mu(x,y) \left(  \sum_{s \in \mathsf{good}(x)} \Pr(S_x=s)  \trp{\ketbra{\psi^x_Q} \tilde{E}^{y}_{f(x,y)}}    \right )+  \nn \\ &\sum_{x,y|1-f(x,y)=b_y}  \mu(x,y) \left(  \sum_{s \in \mathsf{good}(x)} \Pr(S_x=s)  \L(1-\trp{\ketbra{\psi^x_Q} \tilde{E}^{y}_{1-f(x,y)}}  \R)  \right )   \geq 1-\eps -\delta.
\end{align}
Combining with Equations~\eqref{eq:stabsample}, we have 
  \begin{align}
  \sum_{x,y} \mu(x,y) \left(  \sum_{s \in \mathsf{good}(x)}\Pr(S_x=s)  \left( \Pr(B_{xy}= f(x,y))+ \theta \right) \right) \geq 1-\eps -\delta.
\end{align}}
Define,  $\mathsf{good} \defeq \{(x,y)~|~ \err_{x,y}(\cP,f) \leq \eps/\eta\}$. From Markov's inequality $\Pr_{(x,y) \leftarrow \mu}((x,y) \in \mathsf{good}) \geq 1 - \eta$.  Using Equations~\eqref{eq:stabsample},~\eqref{eq100},~\eqref{eq101} and $\eps/\eta + \eta <0.5$, we note that when $(x,y) \in \mathsf{good}$ and $s \in \mathsf{good}_{xy}$,  Bob gives correct answer for $f(x,y)$. Thus, the probability of correctness of $\cP_1$ is at least, 
 \begin{align*}
  \sum_{(x,y)\in \mathsf{good}} \mu(x,y) \cdot  \Pr(S_x \in \mathsf{good}_{xy}) \geq 1-2\eta.
\end{align*}	
In $\cP_1$, the message $S$ (averaged over $x$) that Alice sent is of size $O(Ta^2)$. However, $S$ has low mutual information with $X$. Using Fact~\ref{fact:mono}, we have  
\begin{align*}
    &\imax(X:S)=\imax(X : (\M^{STAB}(Q))^{\otimes T}) \\&\leq  \imax(X:Q^{\otimes T})_\psi \leq Ta.
\end{align*}

Therefore using Fact~\ref{fact:compress}, we can compress $S$ and get a classical one-way communication protocol with message $M$ and public-coin $R$, such that 
\begin{align}
   & \ell(M) \leq  Ta +  O(\log\log (1/\eta)),  
\end{align}
and success probability $1-2\eta-\eta =1-3\eta$. Thus, we have $\sfR^{1,\mu}_{3\eta}(f) \leq Ta +  O(\log\log (1/\eta))$. Noting $K \leq \sfcs(f)$, we have the desired. \qedhere

\section{Acknowledgment}\label{sec:acknowledgement}
The work of RJ was supported in part by the National Research Foundation, Singapore, under Grant NRF2021-QEP2-02-P05; and in part by the Ministry of Education, Singapore, through the Research Centers of Excellence Program.
The work of NGB was done while he was a graduate student at the Centre for Quantum Technologies and was supported by the Ministry of Education, Singapore through the Research Centers of Excellence Program. The work of HL was funded by MOST Grant no. 110-2222-E-007-002-MY3.

\suppress{
\section{Applications}\label{sec5}
Let $n,d,m$ be positive integers and $k,\eps,\delta >0$ be reals. 
\begin{defn}[Quantum secure seeded extractor]\label{qseeded}
	An $(n,d,m)$-seeded extractor $\Ext : \{0,1\}^n \times \{0,1\}^d \to \{0,1\}^m$  is said to be $(k,\eps)$-quantum secure if for every state $\rho_{XEY}$ (with registers $X,Y$ classical), such that $\Hmin(X|E)_\rho \geq k$ and $\rho_{XEY} = \rho_{XE} \otimes U_Y$, we have 
	$$  \| \rho_{\Ext(X,Y)E} - U_m \otimes \rho_{E} \|_1 \leq \eps.$$
	In addition, the extractor is called strong if $$  \| \rho_{\Ext(X,Y)YE} - U_m \otimes U_d \otimes \rho_{E} \|_1 \leq \eps .$$
	$Y$ is referred to as the {\em seed} for the extractor.
\end{defn}
If $\Ext$ satisfies the above definition when $E$ is classical, we call $\Ext$ as $(k,\eps)$-secure seeded extractor. 
\begin{fact}[Non-constructive seeded extractor]\label{cseedednc}
For every positive integers $n,k<n$ and any real $\eps>0$, there exists a $(k,\eps)$-secure strong seeded extractor $\Ext : \{0,1\}^n \times \{0,1\}^d \to \{0,1\}^m$ for the parameters 
\[d= \log(n-k)+2 \log(1/\eps) + O(1) \quad ; \quad k \geq m+3\log(1/\eps)+O(1). \] 
\end{fact}

\begin{fact}[Appendix~B~in~\cite{APS16}]\label{quantuml1toclassicall1}
Let $\rho_{XEY}$ be a state (registers $X,Y$ classical) such that $\rho_{XEY} =\rho_{XE} \otimes U_Y$, $\vert X \vert =2^n$ and $\vert Y \vert =2^d$. Let $\Ext$ be any function $\Ext : \{0,1\}^n \times \{0,1\}^d \to \{0,1\}^m$ and $Z=\Ext(X,Y)$. Let $\mathcal{M}^{pgm}_X$ be a PGM on register $E$ with POVM elements $\{E^{pgm}_{x} = A^{-1/2} A_xA^{-1/2}\}$ where  
\ba
A_x=p_x \rho^x_E, \quad \, A=\sum_x A_x. \nn
\ea
Let $\theta_{ZXE'Y} = (\id_{ZX} \otimes \mathcal{M}^{pgm}_X \otimes \id_Y) \rho_{ZXEY}$~(note $E'=\mathcal{M}^{pgm}_X(E)$). Then, 
\begin{align*}
        \frac{\| \rho_{\Ext(X,Y)YE} - U_m \otimes U_d \otimes \rho_{E} \|_1^2}{2^m} &\leq   \sum_{y} \frac{1}{2^d} \left( \sum_{x_1} \left( \sum_{x_2: \Ext(x_1,y)=\Ext(x_2,y)} \Tr(A_{x_1} E^{pgm}_{x_2}) \right) - \frac{1}{2^m} \right) \\
        &= \frac{1}{2} \| \theta_{\Ext(X,Y)YE'} - U_m \otimes U_d \otimes \theta_{E'} \|_1.
    \end{align*}
\end{fact}
\begin{fact}[\cite{CLW14}]
	\label{fact102}  
	 Let $\Phi :    \mathcal{L} (\cH_M ) \rightarrow   \mathcal{L}(\cH_{M'} )$ be a CPTP map and let $\sigma_{XM'} =(\id \otimes \Phi)(\rho_{XM}) $. Then,  $$ \hmin{X}{M'}_\sigma  \geq \hmin{X}{M}_\rho  .$$
Above is equality if $\Phi$ is a map corresponding to an isometry.
\end{fact}
\begin{cor}[Non-constructive quantum secure seeded extractor]\label{qseedednc}
$\Ext$ from Fact~\ref{cseedednc} is  $(k,\eps'=\sqrt{\frac{2^m\eps}{2}})$-quantum secure strong seeded extractor $\Ext : \{0,1\}^n \times \{0,1\}^d \to \{0,1\}^m$ for the same parameters as in Fact~\ref{cseedednc}.
\end{cor}
\begin{proof}
 Let $\rho$ be as defined in Fact~\ref{quantuml1toclassicall1} and in addition let $\Hmin(X \vert E)_\rho \geq k$. Let $\theta$ be as defined in Fact~\ref{quantuml1toclassicall1}. Using Fact~\ref{fact102}, we further have $\Hmin(X \vert E')_\theta \geq k$. Note $\theta_{XE'Y}= \theta_{XE'} \otimes U_Y$. Thus, from Fact~\ref{cseedednc}, we have $ \| \theta_{\Ext(X,Y)YE'} - U_m \otimes U_d \otimes \theta_{E'} \|_1 \leq \eps.$ Now the desired follows from Fact~\ref{quantuml1toclassicall1}.
\end{proof}
\subsection*{Bounded storage setting}
Let $\rho_{XEY}$ be a state (registers $X,Y$ classical) such that $\rho_{XEY} =\rho_{XE} \otimes \rho_Y$, $\rho_{XY}=U_X \otimes U_Y$, $\vert X \vert =2^n$, $\vert Y \vert =2^d$ and $\vert E \vert \leq \beta n$. Let $1>\alpha>\beta>0$.

Let  $\Ext : \{0,1\}^n \times \{0,1\}^d \to \{0,1\}^m$ be $(\alpha n, \eps)$-quantum secure seeded extractor and $Z=\Ext(X,Y)$.

}

\suppress{

Therefore using Fact~\ref{thm:mutual-info-average}, $T(X:S)\leq Ta +O(\log(Ta))$, we can compress $S$ and get a classical one-way communication protocol with message $S'$ and public-coin $R$, such that 
\begin{align*}
   & \E_{R,X} [\ell(S')] = Ta+O(\log(Ta)),
\end{align*} and Bob on receiving $S'$, can output $S$ depending on $R$. We finish by cutting-off $S'$ at length $(Ta+O(\log(Ta)))/\eta$ and bound the extra error probability by Markov's inequality (Fact~\ref{fact:markov}), getting protocol with message length \begin{align*}
    \ell(S'')=(Ta+O(\log(Ta)))/\eta 
\end{align*}and success probability $1-3\eta.$ Since we are averaging over $(X,Y)$ and $R$, we can fix a "good" public-coin $R$ string which gives the same error rate. Noting $K \leq \sfcs(f)$, the desired follows.}


	\end{proof}

\suppress{
	\section*{Acknowledgment}\label{sec:acknowledgement}
 This work is supported by the NRF RF Award No. NRF-NRFF2013-13; the Prime Minister's Office, Singapore and the Ministry of Education, Singapore, under the Research Centers of Excellence program and by Grant No. MOE2012-T3-1-009; the NRF2017-NRF-ANR004 {\em VanQuTe} Grant and the {\em VAJRA} Grant, Department of Science and Technology, Government of India; Scott Aaronson’s Vannevar Bush Faculty Fellowship and MOST, R.O.C. under Grant No. 110-2222-E-007-002-MY3. 
}

\bibliographystyle{plainnat}
\bibliography{KM-ref}

\begin{thebibliography}{21}
\providecommand{\natexlab}[1]{#1}
\providecommand{\url}[1]{\texttt{#1}}
\expandafter\ifx\csname urlstyle\endcsname\relax
  \providecommand{\doi}[1]{doi: #1}\else
  \providecommand{\doi}{doi: \begingroup \urlstyle{rm}\Url}\fi

\bibitem[Ambainis et~al.(1999)Ambainis, Nayak, Ta-Shma, and Vazirani]{ANTV99}
Andris Ambainis, Ashwin Nayak, Ammon Ta-Shma, and Umesh Vazirani.
\newblock Dense quantum coding and a lower bound for 1-way quantum automata.
\newblock In \emph{Proceedings of the Thirty-First Annual ACM Symposium on
  Theory of Computing}, STOC '99, pages 376--383, New York, USA, 1999.
  Association for Computing Machinery.
\newblock ISBN 1581130678.
\newblock \doi{10.1145/301250.301347}.

\bibitem[Barnum and Knill(2002)]{pgm-bk}
H.~Barnum and E.~Knill.
\newblock {Reversing quantum dynamics with near-optimal quantum and classical
  fidelity}.
\newblock \emph{Journal of Mathematical Physics}, 43\penalty0 (5):\penalty0
  2097--2106, 04 2002.
\newblock ISSN 0022-2488.
\newblock \doi{10.1063/1.1459754}.

\bibitem[Berta et~al.(2011)Berta, Christandl, and Renner]{berta2011quantum}
Mario Berta, Matthias Christandl, and Renato Renner.
\newblock The quantum reverse shannon theorem based on one-shot information
  theory.
\newblock \emph{Communications in Mathematical Physics}, 306\penalty0
  (3):\penalty0 579--615, 2011.
\newblock \doi{10.1007/s00220-011-1309-7}.

\bibitem[Buhrman et~al.(1999)Buhrman, van Dam, H\o{}yer, and Tapp]{BDHT99}
Harry Buhrman, Wim van Dam, Peter H\o{}yer, and Alain Tapp.
\newblock Multiparty quantum communication complexity.
\newblock \emph{Physical Review A}, 60:\penalty0 2737--2741, October 1999.
\newblock \doi{10.1103/PhysRevA.60.2737}.

\bibitem[Buhrman et~al.(2001)Buhrman, Cleve, Watrous, and
  de~Wolf]{buhrman2001quantum}
Harry Buhrman, Richard Cleve, John Watrous, and Ronald de~Wolf.
\newblock Quantum fingerprinting.
\newblock \emph{Phys. Rev. Lett.}, 87:\penalty0 167902, Sep 2001.
\newblock \doi{10.1103/PhysRevLett.87.167902}.
\newblock URL \url{https://link.aps.org/doi/10.1103/PhysRevLett.87.167902}.

\bibitem[Datta(2009)]{Datta_2009}
Nilanjana Datta.
\newblock Min- and max-relative entropies and a new entanglement monotone.
\newblock \emph{{IEEE} Transactions on Information Theory}, 55\penalty0
  (6):\penalty0 2816--2826, June 2009.
\newblock \doi{10.1109/tit.2009.2018325}.
\newblock URL \url{https://doi.org/10.1109%2Ftit.2009.2018325}.

\bibitem[Gavinsky et~al.(2007)Gavinsky, Kempe, Kerenidis, Raz, and
  de~Wolf]{GKKRW07}
Dmitry Gavinsky, Julia Kempe, Iordanis Kerenidis, Ran Raz, and Ronald de~Wolf.
\newblock Exponential separations for one-way quantum communication complexity,
  with applications to cryptography.
\newblock In \emph{Proceedings of the Thirty-Ninth Annual ACM Symposium on
  Theory of Computing}, STOC '07, pages 516--525, New York, USA, 2007.
  Association for Computing Machinery.
\newblock ISBN 9781595936318.
\newblock \doi{10.1145/1250790.1250866}.
\newblock URL \url{https://doi.org/10.1145/1250790.1250866}.

\bibitem[Huang et~al.(2020)Huang, Kueng, and Preskill]{HK19}
Hsin-Yuan Huang, Richard Kueng, and John Preskill.
\newblock {Predicting many properties of a quantum system from very few
  measurements}.
\newblock \emph{Nature Physics}, 16\penalty0 (10):\penalty0 1050--1057, 2020.
\newblock ISSN 1745-2481.
\newblock \doi{10.1038/s41567-020-0932-7}.
\newblock URL \url{https://doi.org/10.1038/s41567-020-0932-7}.

\bibitem[Jain and Zhang(2009)]{JZ09}
Rahul Jain and Shengyu Zhang.
\newblock New bounds on classical and quantum one-way communication complexity.
\newblock \emph{Theoretical Computer Science}, 410\penalty0 (26):\penalty0
  2463--2477, 2009.
\newblock ISSN 0304-3975.
\newblock \doi{https://doi.org/10.1016/j.tcs.2008.10.014}.
\newblock URL
  \url{https://www.sciencedirect.com/science/article/pii/S0304397508007627}.

\bibitem[Jain et~al.(2003)Jain, Radhakrishnan, and Sen]{JRS03}
Rahul Jain, Jaikumar Radhakrishnan, and Pranab Sen.
\newblock A direct sum theorem in communication complexity via message
  compression.
\newblock In \emph{Proceedings of the 30th international conference on
  Automata, languages and programming}, ICALP'03, pages 300--315, Berlin,
  Heidelberg, 2003. Springer-Verlag.
\newblock ISBN 3-540-40493-7.
\newblock URL \url{http://dl.acm.org/citation.cfm?id=1759210.1759242}.

\bibitem[Jain et~al.(2005)Jain, Radhakrishnan, and Sen]{JRS05}
Rahul Jain, Jaikumar Radhakrishnan, and Pranab Sen.
\newblock Prior entanglement, message compression and privacy in quantum
  communication.
\newblock In \emph{Proceedings of the 20th Annual IEEE Conference on
  Computational Complexity}, pages 285--296, Washington, DC, USA, 2005. IEEE
  Computer Society.
\newblock ISBN 0-7695-2364-1.
\newblock \doi{10.1109/CCC.2005.24}.
\newblock URL \url{http://dl.acm.org/citation.cfm?id=1068502.1068658}.

\bibitem[Jain et~al.(2008)Jain, Klauck, and Nayak]{JKN08}
Rahul Jain, Hartmut Klauck, and Ashwin Nayak.
\newblock Direct product theorems for classical communication complexity via
  subdistribution bounds: Extended abstract.
\newblock In \emph{Proceedings of the Fortieth Annual ACM Symposium on Theory
  of Computing}, STOC '08, pages 599--608, New York, USA, 2008. Association for
  Computing Machinery.
\newblock ISBN 9781605580470.
\newblock \doi{10.1145/1374376.1374462}.
\newblock URL \url{https://doi.org/10.1145/1374376.1374462}.

\bibitem[Konig and Terhal(2008)]{konig2008bounded}
Robert~T. Konig and Barbara~M. Terhal.
\newblock The bounded-storage model in the presence of a quantum adversary.
\newblock \emph{IEEE Transactions on Information Theory}, 54\penalty0
  (2):\penalty0 749--762, 2008.
\newblock \doi{10.1109/TIT.2007.913245}.

\bibitem[Kremer et~al.(1995)Kremer, Nisan, and Ron]{KNR95}
Ilan Kremer, Noam Nisan, and Dana Ron.
\newblock On randomized one-round communication complexity.
\newblock In \emph{Proceedings of the Twenty-Seventh Annual ACM Symposium on
  Theory of Computing}, STOC '95, pages 596--605, New York, USA, 1995.
  Association for Computing Machinery.
\newblock ISBN 0897917189.
\newblock \doi{10.1145/225058.225277}.
\newblock URL \url{https://doi.org/10.1145/225058.225277}.

\bibitem[Kushilevitz and Nisan(1996)]{KN96}
Eyal Kushilevitz and Noam Nisan.
\newblock \emph{Communication Complexity}.
\newblock Cambridge University Press, 1996.
\newblock \doi{10.1017/CBO9780511574948}.

\bibitem[Renes(2017)]{renes2017better}
Joseph~M. Renes.
\newblock Better bounds on optimal measurement and entanglement recovery, with
  applications to uncertainty and monogamy relations.
\newblock \emph{Phys. Rev. A}, 96:\penalty0 042328, Oct 2017.
\newblock \doi{10.1103/PhysRevA.96.042328}.
\newblock URL \url{https://link.aps.org/doi/10.1103/PhysRevA.96.042328}.

\bibitem[Watrous(2018)]{Wat16}
John Watrous.
\newblock \emph{The Theory of Quantum Information}.
\newblock Cambridge University Press, 2018.
\newblock \doi{10.1017/9781316848142}.

\bibitem[Wilde(2012)]{Wil12}
Mark~M. Wilde.
\newblock \emph{Quantum Information Theory}.
\newblock Cambridge University Press, 2012.
\newblock ISBN 9781139525343.
\newblock \doi{10.1017/CBO9781139525343}.

\bibitem[Yao(1979)]{Yao79}
Andrew Chi-Chih Yao.
\newblock Some complexity questions related to distributive
  computing(preliminary report).
\newblock In \emph{Proceedings of the Eleventh Annual ACM Symposium on Theory
  of Computing}, STOC '79, pages 209--213, New York, USA, 1979. Association for
  Computing Machinery.
\newblock ISBN 9781450374385.
\newblock \doi{10.1145/800135.804414}.
\newblock URL \url{https://doi.org/10.1145/800135.804414}.

\bibitem[Yao(1993)]{Yao93}
Andrew Chi-Chih Yao.
\newblock Quantum circuit complexity.
\newblock In \emph{Proceedings of 1993 IEEE 34th Annual Foundations of Computer
  Science}, pages 352--361, 1993.
\newblock \doi{10.1109/SFCS.1993.366852}.

\bibitem[Yao(1977)]{Yao77}
Andrew Chi-Chin Yao.
\newblock Probabilistic computations: Toward a unified measure of complexity.
\newblock In \emph{18th Annual Symposium on Foundations of Computer Science
  (sfcs 1977)}, pages 222--227, 1977.
\newblock \doi{10.1109/SFCS.1977.24}.

\end{thebibliography}

\end{document}